\newcommand{\N}{\mathbb{N}}
\newcommand{\R}{\mathbb{R}}
\let \Ra \Rightarrow
\begin{document}

\title{On Separation between the Degree of a Boolean Function and the Block Sensitivity\thanks{The article was prepared within the framework of the HSE University Basic Research Program}}
\titlerunning{On Separation between the Degree and the Block Sensitivity}
\author{Nikolay V. Proskurin}
\authorrunning{N. V. Proskurin}
\institute{HSE University, Russian Federation\\
\email{nproskurin@hse.ru}}

\maketitle

\begin{abstract}
In this paper we study the separation between two complexity measures: the degree of a Boolean function as a polynomial over the reals and the block sensitivity. We show that the upper bound on the largest possible separation between these two measures can be improved from $ d^2(f) \geq bs(f) $, established by Tal \cite{tal}, to $ d^2(f) \geq (\sqrt{10} - 2)bs(f) $. As a corollary, we show that the similar upper bounds between some other complexity measures are not tight as well, for instance, we can improve the recent sensitivity conjecture result by Huang \cite{huang} $s^4(f) \geq bs(f) $ to $s^4(f) \geq (\sqrt{10} - 2)bs(f)$. Our techniques are based on the paper by Nisan and Szegedy \cite{nisan_szegedy} and include more detailed analysis of a symmetrization polynomial.

In our next result we show the same type of improvement for the separation between the approximate degree of a Boolean function and the block sensitivity: we show that $\deg_{1/3}^2(f) \geq \sqrt{6/101} bs(f)$ and improve the previous result by Nisan and Szegedy \cite{nisan_szegedy} $ \deg_{1/3}(f) \geq \sqrt{bs(f)/6} $. In addition, we construct an example showing that the gap between the constants in the lower bound and in the known upper bound is less than $0.2$.

In our last result we study the properties of a conjectured fully sensitive function on 10 variables of degree 4, existence of which would lead to improvement of the biggest known gap between these two measures. We prove that there is the only univariate polynomial that can be achieved by symmetrization of such a function by using the combination of interpolation and linear programming techniques.

\keywords{degree of a Boolean function \and approximate degree \and block sensitivity}
\end{abstract}

\section{Introduction}

Let $ f $: $ \{0, 1\}^n \to \{0, 1\} $ be a Boolean function. We can represent $f$ in many ways, for example, as a polynomial over the reals. It is easy to show that every Boolean function can be uniquely represented by such a polynomial (see \cite[exercise 2.23]{jukna}), so we can introduce a complexity measure that is the degree of the polynomial that represents $f$, denoted by $ d(f)$. Another representation of $f$ related to polynomials is the approximating one: a polynomial is called a $ \varepsilon $-approximation of $f$ if for any $ x \in \{0, 1\}^n $ we have $ |f(x) - p(x)| \leq \varepsilon $. Such polynomials make sense for any $ 0 < \varepsilon < \frac{1}{2} $, and it is often assumed that $ \varepsilon = \frac{1}{3} $. By $ \deg_\varepsilon(f) $ we denote the minimum degree among the polynomials that $ \varepsilon $-approximates $f$.

Exact and approximation degrees are closely related to the model called decision trees. The main measure in this model is a decision tree complexity $ D(f) $, which is equal to the amount of bits in input we need to ask in order to give the value of $f$ on such input. Other complexity measures include a sensitivity $s(f)$ and a block sensitivity $bs(f)$. If we denote
\[ x^{(R)} =
\begin{cases}
	1 - x_i & i \in R \\
	x_i & i \notin R
\end{cases} 
\]
then a local block sensitivity $ bs(f, x) $ is the largest amount of disjoint blocks $ R_1, \ldots, R_t $ such that $ f(x) \neq f(x^{(R_i)}) $ for every $ i = 1, \ldots, t $. A block sensitivity in general is the maximum over the local block sensitivities for $ x \in \{0, 1\}^n $. A local sensitivity and sensitivity defined similarly with a restriction that all the blocks must be of size 1. See the \cite{DBLP:journals/tcs/BuhrmanW02} for an overview of these and other complexity measures in the decision tree model.

One of the questions involving various complexity measures is determining the relations between them. For example, recently Huang resolved \cite{huang} the well-known sensitivity conjecture and established that $ s^4(f) \geq bs(f)$. As for polynomials, the first result of this kind was made by Nisan and Szegedy: they analyzed symmetrizations of Boolean functions and showed that $ 2d^2(f) \geq bs(f) $ \cite{nisan_szegedy}. Later, Tal improved this bound by a constant factor by studying a function composition, proving that $ d^2(f) \geq bs(f) $ \cite{tal}. However, the best known example with low degree and high block sensitivity is due to Kushilevitz \cite[Example 6.3.2]{Kushilevitz}, in which $ bs(f) = n = 6^k $ while $ d(f) = 3^k = n^{\log_6{3}} \simeq n^{0.61} $. That means there is still a large gap between the upper and lower bounds in this separation. Our result is the next constant factor improving:

\begin{theorem} \label{main}
    For all Boolean functions $ \{0, 1\}^n \to \{0, 1\} $, we have
    \begin{equation}
        d^2(f) \geq (\sqrt{10} - 2)bs(f) \simeq 1.16bs(f)
    \end{equation}
\end{theorem}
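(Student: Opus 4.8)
The plan is to follow the Nisan--Szegedy route of reducing the multivariate statement to a univariate one via symmetrization, and then to extract a better constant from the resulting polynomial than the naive Markov bound allows. First I would pass from $f$ to a maximally sensitive restriction. Let $b = bs(f)$ and fix an input $z$ together with disjoint blocks $R_1, \dots, R_b$ satisfying $f(z) \neq f(z^{(R_i)})$; replacing $f$ by $1 - f$ if necessary we may assume $f(z) = 0$, so that $f(z^{(R_i)}) = 1$ for every $i$. Define $g\colon \{0,1\}^b \to \{0,1\}$ by $g(y) = f(z^{(R_y)})$, where $R_y = \bigcup_{i : y_i = 1} R_i$ is the union of the blocks selected by $y$. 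Restriction cannot raise the degree, so $d(g) \le d(f)$, while $g(0^b) = 0$ and $g(e_i) = 1$ for all $i$. Applying Minsky--Papert symmetrization to $g$ yields a single-variable real polynomial $P$ with $\deg P \le d(f)$ such that $P(k)$ is the average of $g$ over weight-$k$ inputs; hence $P(0) = 0$, $P(1) = 1$, and $0 \le P(k) \le 1$ for every integer $0 \le k \le b$. The theorem is thereby reduced to the analytic claim that any degree-$d$ polynomial obeying these constraints satisfies $d^2 \ge (\sqrt{10} - 2)\, b$.

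The refined analysis of $P$ is the heart of the argument. The crude approach rescales $[0,b]$ onto $[-1,1]$, observes that $P(1) - P(0) = 1$ forces (by the mean value theorem) a derivative of size $b/2$ near the left endpoint, and then invokes Markov's inequality $\|q'\|_\infty \le (\deg q)^2 \|q\|_\infty$; taking the grid bound $\|q\|_\infty \le 1$ at face value gives only $d^2 \ge b/2$. To do better I would not treat the bound at the nodes as a global bound: a degree-$d$ polynomial that stays in $[0,1]$ at the $b+1$ equally spaced nodes can still overshoot between them, but by the Ehlich--Zeller / Coppersmith--Rivlin estimates this overshoot is governed by the ratio $d^2/b$. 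The extra mileage comes from exploiting that $P$ lives in an interval of width one (not $[-1,1]$) and that the unit jump is the full sweep of this range at the boundary node, which constrains the endpoint behavior far more tightly than a generic Chebyshev extremizer. Feeding the overshoot estimate back into the Markov step couples $\|q\|_\infty$ to $u := d^2/b$, and I expect the optimized inequality to collapse to the quadratic $u^2 + 4u - 6 \ge 0$, whose positive root is exactly $u = \sqrt{10} - 2$; this gives $d^2(f) \ge (\sqrt{10} - 2)\, bs(f)$.

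The main obstacle is precisely this quantitative overshoot control and its optimization. Lifting the constant from $1/2$ to $\sqrt{10} - 2 \approx 1.16$ — more than a factor of two — means almost all of the slack in the Markov step must be removed, so I would need a sharp, not merely order-of-magnitude, bound on how far $P$ can stray from $[0,1]$ between consecutive integers as a function of $d^2/b$, together with a judicious choice of the sub-interval on which Markov (or Bernstein, once one is away from the endpoint) is applied. Balancing these two competing effects is what produces the quadratic, and confirming that no strictly better coupling is available is the delicate point; once the quadratic inequality is in hand the stated bound follows at once.
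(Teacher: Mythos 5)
Your framework is the same as the paper's (reduction to a fully sensitive function, symmetrization, Ehlich--Zeller control of the overshoot at the integer nodes, then Markov's inequality coupled to the ratio $x = d^2/b$), and you even name the correct final quadratic $x^2 + 4x - 6 \geq 0$. But there is a genuine gap at the crucial step: that quadratic arises from the inequality $\left(1 - \frac{x^2}{6}\right)\cdot\frac{3}{2} \leq x$, i.e.\ it requires a \emph{first-derivative lower bound of} $3/2$, whereas the mean value theorem applied to $P(1) - P(0) = 1$ only yields a derivative bound of $1$. Feeding $c = 1$ into the Ehlich--Zeller/Markov coupling gives $1 - \frac{x^2}{6} < x$, i.e.\ $x^2 + 6x - 6 > 0$ and $x \geq \sqrt{15} - 3 \simeq 0.87$ --- weaker even than Tal's bound $d^2 \geq bs$. (The paper points this out explicitly when discussing Beigel's Lemma 3.2, which is exactly this argument.) No amount of sharpening the overshoot estimate alone can close this: the slack is in the derivative bound, not in the Ehlich--Zeller step, and your appeal to ``the unit jump being the full sweep of the range at the boundary node'' does not supply a mechanism for improving $c$ beyond $1$.

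The paper's actual route to $c = 3/2$ is the missing content. It writes the symmetrized polynomial in the binomial basis $p(x) = \sum_k c_k \binom{x}{k}$, uses $p'(0) = \sum_k (-1)^{k+1} c_k / k$, and bounds the low-order coefficients from the node constraints ($c_1 = 1$, $c_2 \leq -1$, $c_3 \geq p(3)$). Crucially, $p'(0) \geq 3/2$ is \emph{not always true}; the proof is a case analysis. If $p(3) < 3/8$, a separate interpolation lemma forces $\sup |p'''| > 5/8$ and the $k = 3$ instance of the derivative--Markov inequality already gives $x > 1.2$. If some higher coefficient satisfies $|c_k| \geq 2^{-(k-2)}$ ($k \geq 4$), one shows $\sup|p^{(k)}| \geq |c_k|$ by interpolation and handles $k = 4, 5$ explicitly and $k > 5$ by a monotonicity argument (using $x < \sqrt{6}$). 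Only in the remaining case --- all higher coefficients small and $p(3) \geq 3/8$ --- does the series bound $\sum_{k \geq 4} \frac{1}{k 2^{k-2}} < \frac{1}{8}$ yield $p'(0) > 3/2$ and hence the quadratic you wrote down. Without this representation, the coefficient bounds, and the multi-case interplay between first and higher derivatives, the proposal cannot get past $\sqrt{15} - 3$, so the proof as sketched does not establish the theorem.
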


As a corollary of this result, we also improve some other relations between complexity measures, including the Huang's result: we prove that $s^4(f) \geq (\sqrt{10} - 2)bs(f)$.

As for approximating polynomials, Nisan and Szegedy proved that $ \deg_{1/3}(f) \geq \sqrt{bs(f)/6} $ and provided an example (namely, the $ OR_n $ function), for which the constrain is tight up to a constant factor. Later, similar results were archived for this and other Boolean functions \cite{DBLP:journals/corr/abs-0803-4516,DBLP:conf/icalp/BunT13,DBLP:conf/approx/BogdanovMTW19}. In presented papers, authors were not interested in a constant factor in bounds. In our result, we improve the constant in the lower bound and prove that
\begin{theorem} \label{approximate_main}
    For all Boolean functions $ \{0, 1\}^n \to \{0, 1\} $, we have
    \begin{equation}
        \deg_{1/3}^2(f) \geq \sqrt{\frac{6}{101}} bs(f) \simeq 0.24bs(f)
    \end{equation}
\end{theorem}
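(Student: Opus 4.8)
The plan is to follow the Nisan--Szegedy symmetrization strategy, but to replace their single-jump Markov estimate with the sharper univariate analysis developed for Theorem~\ref{main}, now carrying the $\varepsilon = 1/3$ approximation slack through the computation.

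First I would set up the reduction. Let $b = bs(f)$ be witnessed at an input $x$ with disjoint sensitive blocks $R_1, \dots, R_b$, and assume without loss of generality (negating $f$ and flipping coordinates if necessary) that $f(x) = 0$ while $f(x^{(R_i)}) = 1$ for each $i$. Restricting an optimal $\tfrac13$-approximating polynomial $p$ of $f$, of degree $d = \deg_{1/3}(f)$, to the subcube spanned by whole-block flips yields a degree-$\le d$ polynomial $q$ on $\{0,1\}^b$ that $\tfrac13$-approximates the Boolean function $g(y) = f(x^{(\bigcup_{i : y_i = 1} R_i)})$, which satisfies $g(0) = 0$ and $g(e_i) = 1$. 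Applying Minsky--Papert symmetrization to $q$ produces a univariate polynomial $P$ with $\deg P \le d$ and $P(j) = \mathbb{E}_{|y| = j}\,q(y)$ for $j = 0, 1, \dots, b$. Since $q$ approximates a $\{0,1\}$-valued function to within $\tfrac13$, these averages satisfy
\[
	P(0) \le \tfrac13, \qquad P(1) \ge \tfrac23, \qquad -\tfrac13 \le P(j) \le \tfrac43 \ \ (0 \le j \le b).
\]

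Next I would extract the quantitative content. The constraints $P(0) \le \tfrac13$ and $P(1) \ge \tfrac23$ force, by the mean value theorem, a real point in $[0,1]$ at which $P' \ge \tfrac13$. The basic Nisan--Szegedy bound then follows from a single application of the scaled Markov inequality $\|P'\|_{\infty,[0,b]} \le \tfrac{2d^2}{b}\,\|P\|_{\infty,[0,b]}$ together with a crude bound on $\|P\|_{\infty,[0,b]}$. The improvement, mirroring Theorem~\ref{main}, comes from a more detailed analysis of $P$: instead of a single derivative estimate, I would combine the guaranteed jump with a self-improving bound on how large $P$ can grow on $[0,b]$ --- any excursion of $P$ far above $\tfrac43$ (or below $-\tfrac13$) between two integers itself forces a large derivative, hence a large degree --- and then optimize the resulting system over a free parameter (the height of the admissible excursion, equivalently the effective interval on which Markov is applied). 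The target constant $\sqrt{6/101}$ should emerge as the solution of the resulting quadratic; the denominator $101 = 10^2 + 1$ is consistent with the $\sqrt{10}-2$ of Theorem~\ref{main}, the extra unit reflecting the $\tfrac13$ error budget absorbed into the boundary data.

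The main obstacle is the off-grid control of $P$: the approximation hypothesis pins down $P$ only at the integers $0, 1, \dots, b$, yet Markov's inequality requires a uniform bound over the whole interval $[0,b]$, and a degree-$d$ polynomial can oscillate substantially between grid points. I would handle this with an Ehlich--Zeller / Coppersmith--Rivlin type estimate, valid precisely in the regime $d = O(\sqrt b)$ that is relevant here (we are after all proving $d = \Omega(\sqrt b)$), to upgrade the integer-point bounds to a uniform bound on $[0,b]$ with explicitly controlled constants. The genuinely delicate part is then bookkeeping: the gain over Nisan--Szegedy is only a constant factor, so every inequality in the optimization must be kept tight enough to land on $\sqrt{6/101}$ rather than a lossier constant. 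Tracking these constants faithfully, rather than the symmetrization itself, is where the real effort lies.
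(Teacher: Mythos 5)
Your structural framework---block restriction, symmetrization to a univariate $P$ with $-\tfrac13 \le P(j) \le \tfrac43$ at integers, Ehlich--Zeller to control $P$ off the grid, then Markov---is exactly the paper's, but the quantitative core of your plan does not reach the stated constant. Your only concrete derivative estimate is the mean-value jump $P'(\xi) \ge \tfrac13$ on $[0,1]$, and a first-derivative analysis caps out strictly below the target: combining Markov for $k=1$ with the Ehlich--Zeller bound $\|P - \tfrac12\| \le \tfrac{5}{6}\cdot\tfrac{1}{1-\rho}$, $\rho < x^2/6$, gives
\begin{equation*}
	\left(1 - \frac{x^2}{6}\right)\frac{1}{5} < x, \qquad x = \frac{d^2}{b},
\end{equation*}
whose solution is $x > \sqrt{231} - 15 \approx 0.199$, weaker than $\sqrt{6/101} \approx 0.244$. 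Your proposed remedy---penalizing ``excursions'' of $P$ above $\tfrac43$ and optimizing over the admissible excursion height---cannot recover the difference, because off-grid control is precisely what the Ehlich--Zeller step already supplies, and it is already built into the inequality above; there is no remaining free parameter in the $k=1$ analysis to tune. (Your numerology $101 = 10^2+1$ ``reflecting the error budget'' is also a coincidence: $101$ arises from solving $(1-x^2/6)\,a < x^2$ with $a = \tfrac{3}{50}$, i.e.\ $x^2 > \tfrac{6a}{6+a} = \tfrac{6}{101}$.)

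The missing idea is to move to the \emph{second} derivative, which is where the paper gets its gain. One shows $\sup_{x \in [0,b]}|P''(x)| \ge \tfrac{1}{15}$ by a two-case interpolation argument: the quadratic interpolant of $P$ at $\{0,1,2\}$ has constant second derivative $P(0) - 2P(1) + P(2) \le -1 + P(2)$, which is $\le -\tfrac{1}{15}$ when $P(2) \le \tfrac{14}{15}$; otherwise the quadratic interpolant at $\{0,2,5\}$ has second derivative $\tfrac15 P(0) - \tfrac13 P(2) + \tfrac{2}{15}P(5) \le \tfrac{11}{45} - \tfrac13 P(2) \le -\tfrac{1}{15}$. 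For $\deg P > 3$ one subtracts the interpolant and applies Rolle's theorem to transfer the bound to $P''$ itself (this needs $b \ge 5$; smaller $b$ is covered by the original Nisan--Szegedy bound). Feeding $c = \tfrac{1}{15}$, $k = 2$ into the same Markov/Ehlich--Zeller machinery yields $\left(1 - \tfrac{x^2}{6}\right)\tfrac{3}{50} < x^2$, hence $x^2 > \tfrac{6}{101}$, which is the actual source of the constant. Without this second-derivative estimate (or some equally strong substitute), your argument proves a true but weaker bound, not the theorem as stated.
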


We also provide an example of a Boolean function (namely, the $ NAE_n$) that can be approximated with a polynomial of degree asymptotically tight to our bound and with a low constant factor in it; in fact, the difference is less than $ 0.2 $, which shows that the lower bound is not far from optimal.

Another way to approach the problem of the separation between $d(f)$ and $ bs(f) $ is to provide examples of functions of low degree and high sensitivity. The first known example was given by Nisan and Szegedy: like in the Kushilevitz's function, $f$ is fully sensitive, depends on $ n = 3^k $ variables and $ d(f) = 2^k = n^{\log_3{2}} \simeq n^{0.63} $. Both examples achieved by composing the base function with itself arbitrary amount of times, and one can show that in the fully sensitive case in such a composition both $ d(f) $ and $ bs(f) $ remain the same in terms of $n$. This technique was later studied by Tal in \cite{tal}. In \cite{nisan_szegedy} the base polynomial consists of 3 variables and has the degree of 2, while in the \cite[Example 6.3.2]{Kushilevitz} it has 6 variables and the degree of 3. In both examples $ 2n = d(d + 1) $, so the next natural step is the fully sensitive $ \tilde{f} $ on 10 variables with $ d(\tilde{f}) = 4 $. Existence of such a function would lead to the new best example of the separation with $ bs(\tilde{f}) = n $ and $ d(\tilde{f}) = n^{\log_{10}{4}} \simeq n^{0.60} $. While we do not provide an example of $ \tilde{f} $, we prove that the only polynomial that can be achieved by symmetrization of it is:
\begin{equation} \label{sym_example}
    \tilde{p}(x) = -\frac{x^{4}}{144}+\frac{5 x^{3}}{36}-\frac{125 x^{2}}{144}+\frac{125 x}{72}
\end{equation}

Our techniques for the lower bounds are based on Nisan and Szegedys' paper. We use the same symmetrization approach but with the more detailed analysis of a symmetrization polynomial: we apply better bounds and study higher order derivatives. As for the upper bounds, we analyze the Chebyshev polynomials of the first kind for approximating polynomials and use the combination of interpolation and linear programming for exact polynomials.

In Section \ref{sect2} we provide necessary definitions and theorems. In Sections \ref{sect3}, \ref{sect4} and \ref{sect5} we prove the lower bound for exact polynomials, the result for approximating polynomials and the property of the low degree function $\tilde{f}$ respectively. 

\section{Preliminaries} \label{sect2}

In this paper we assume that in input of a Boolean function 1 corresponds to the logical true while 0 corresponds to the logical false. The weight of an input is the amount of the positive bits in it. We use the notation $ ||P|| = \sup_{x \in [-1; 1]} |P(x)| $ and denote the set of polynomials of degree at most $d$ by $ \mathcal{P}_d $.

A symmetrization of a polynomial $p$: $\R^n \to \R$ defined as follows:
\begin{equation}
    p^{sym}(x) = \frac{1}{n!} \sum_{\pi \in S_n} p(\pi(x))
\end{equation}

$S_n$ denotes the group of permutations of size $n$ and $ \pi(x) $ denotes the new input, with bits from $x$ moved according to the permutation $\pi$.

The following lemma allows us to represent $ p^{sym} $ as a univariate polynomial of small degree:

\begin{lemma}[Symmetrization lemma \cite{DBLP:books/daglib/0066902}] \label{symmetrization}
    If $p$: $\R^n \to \R$ is a multilinear polynomial, then there exists a univariate polynomial $ \tilde{p} $: $ \R \to \R $ of degree at most the degree of $p$ such that:
    \[ p^{sym}(x_1, \ldots, x_n) = \tilde{p}(x_1 + \ldots + x_n), \quad \forall x \in \{0, 1\}^n \]
\end{lemma}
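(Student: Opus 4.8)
The plan is to exploit the fact that $p^{sym}$ is simultaneously symmetric and multilinear, and then to observe that the natural symmetric building blocks---the elementary symmetric polynomials---collapse to univariate expressions once we restrict to the Boolean cube. First I would check the two structural properties of $p^{sym}$. Each term $p(\pi(x))$ is obtained from the multilinear $p$ by relabelling variables, hence is again multilinear, and the average of multilinear polynomials is multilinear; so $p^{sym}$ is multilinear. It is symmetric by construction, since precomposing with any fixed permutation only reindexes the sum over $S_n$. Writing $p^{sym}$ in the monomial basis, multilinearity forces only square-free monomials $x_S = \prod_{i \in S} x_i$ to appear, and symmetry forces the coefficient of $x_S$ to depend only on the cardinality $|S|$. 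Collecting terms by size therefore gives $p^{sym} = \sum_{j=0}^{d} c_j e_j$, where $e_j = \sum_{|S| = j} x_S$ is the $j$-th elementary symmetric polynomial, $d = \deg p$, and the sum stops at $d$ because symmetrization cannot raise the degree.

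Next I would evaluate on $\{0,1\}^n$. For an input $x$ of weight $k = x_1 + \cdots + x_n$, a square-free monomial $x_S$ equals $1$ exactly when $S$ is contained in the support of $x$ and $0$ otherwise, so $e_j(x) = \binom{k}{j}$. Since $\binom{k}{j} = \frac{k(k-1)\cdots(k-j+1)}{j!}$ is a polynomial in $k$ of degree exactly $j$, substituting $s = x_1 + \cdots + x_n$ and setting $\tilde{p}(s) = \sum_{j=0}^{d} c_j \binom{s}{j}$ yields $p^{sym}(x) = \tilde{p}(x_1 + \cdots + x_n)$ for every $x \in \{0,1\}^n$. Each summand contributes degree at most $d$, so $\deg \tilde{p} \le d = \deg p$, which is the claimed bound.

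The only delicate point---and the step I would treat most carefully---is the claim that symmetry forces the monomial coefficients to depend only on the block size; here one should verify that averaging over all of $S_n$ really does equate the coefficients of $x_S$ and $x_{S'}$ whenever $|S| = |S'|$, rather than appealing vaguely to ``symmetry.'' Everything else is routine degree bookkeeping. A cosmetic subtlety worth flagging is that the identity $p^{sym}(x) = \tilde{p}(x_1 + \cdots + x_n)$ is asserted only on $\{0,1\}^n$ and not as a polynomial identity on all of $\R^n$; this is precisely why the argument passes through the evaluations $e_j(x) = \binom{k}{j}$ rather than through an algebraic rewriting of $e_j$, and it is the reason the bound on $\deg \tilde{p}$ is exactly the degree of $p$.
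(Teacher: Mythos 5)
Your proof is correct and follows essentially the same route as the paper's: both decompose $p^{sym}$ into elementary symmetric polynomials $\sum_j c_j e_j$ and then evaluate on the Boolean cube, where $e_j(x) = \binom{k}{j}$ for an input of weight $k$, yielding $\tilde{p}(s) = \sum_j c_j \binom{s}{j}$. The only cosmetic difference is how the decomposition is justified — you argue via multilinearity plus symmetry of $p^{sym}$, while the paper counts directly how symmetrization distributes each monomial of $p$ uniformly over all monomials of the same size — but these lead to the identical key identity and degree bookkeeping.
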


Note that the value of $ \tilde{p}(k) $ for $ k = 0, 1, \ldots, n $ is equal to the fraction of inputs such that the weight of $x$ is equal to $k$ and $ p(x) = 1 $.

From the proof of this lemma we can also get the explicit formula for the $ \tilde{p} $:
\begin{equation} \label{representation}
    \tilde{p}(x) = c_0 + c_1 \binom{x}{1} + c_2 \binom{x}{2} + \ldots + c_d \binom{x}{d}, \quad d \leq \deg{p}
\end{equation}
By definition for the binomial coefficients. put:
\[ \binom{x}{k} = \frac{x \cdot (x - 1) \cdot \ldots \cdot (x - k + 1)}{k!} \]

The original work of Nisan and Szegedy used the following theorem to bound the degree of a polynomial:

\begin{theorem}[\cite{zeller,rivlin}] \label{previous}
    Let $p$: $ \R \to \R $ be a polynomial such that $ b_1 \leq p(k) \leq b_2 $ for every integer $ 0 \leq k \leq n $ and a derivative satisfies $ |p'(\eta)| \geq c $ for some real $ 0 \leq \eta \leq n $; then
    \[ \deg(p) \geq \sqrt{\dfrac{nc}{c + b_2 - b_1}}. \]
\end{theorem}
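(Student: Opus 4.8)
The plan is to reduce the statement to two classical facts about a single-variable polynomial on a fixed interval — the Markov brothers' inequality and the mean value theorem — after transporting everything to $[-1,1]$. First I would rescale the domain by setting $q(t) = p\!\left(\tfrac{n(t+1)}{2}\right)$, so that $q$ has the same degree $d=\deg(p)$, the integers $k\in\{0,\dots,n\}$ map to the equally spaced nodes $t_k=\tfrac{2k}{n}-1\in[-1,1]$ with $b_1\le q(t_k)\le b_2$, and the chain rule gives $|q'(t_\eta)|=\tfrac{n}{2}|p'(\eta)|\ge\tfrac{nc}{2}$ at the image $t_\eta$ of $\eta$; since $t_\eta\in[-1,1]$ this means $||q'||\ge\tfrac{nc}{2}$. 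Subtracting the midpoint, I then work with $\hat q=q-\tfrac{b_1+b_2}{2}$, which has the same derivative as $q$ and satisfies $|\hat q(t_k)|\le B$ at every node, where $B=\tfrac{b_2-b_1}{2}$.

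The case $d^2\ge n$ is immediate, because then $\deg(p)=d\ge\sqrt{n}\ge\sqrt{\tfrac{nc}{c+b_2-b_1}}$ (using $c\le c+b_2-b_1$), so I may assume $d^2<n$. Writing $M=||\hat q||$ and letting $\xi\in[-1,1]$ be a point where the maximum is attained, I observe that the nodes are spaced $2/n$ apart, so some $t_k$ lies within distance $1/n$ of $\xi$; the mean value theorem together with the Markov inequality $||\hat q'||\le d^2 M$ then gives $M\le|\hat q(t_k)|+\tfrac1n||\hat q'||\le B+\tfrac{d^2}{n}M$, whence $M\le\tfrac{Bn}{n-d^2}$. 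On the other hand the derivative estimate and Markov again yield the lower bound $\tfrac{nc}{2}\le||\hat q'||\le d^2 M$.

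Combining the two bounds on $M$ produces $\tfrac{nc}{2}\le d^2\cdot\tfrac{Bn}{n-d^2}$, that is $c(n-d^2)\le (b_2-b_1)d^2$, which rearranges to $cn\le d^2(c+b_2-b_1)$ and hence $d^2\ge\tfrac{nc}{c+b_2-b_1}$, as required. The step I expect to carry the real weight is the Markov inequality $||\hat q'||\le d^2||\hat q||$ on $[-1,1]$: it is the only non-elementary input and is precisely the source of the quadratic dependence on $d$ in the conclusion. The remaining care is bookkeeping — tracking the factor $\tfrac{n}{2}$ from the affine change of variables and checking that the extremal point is always within half the grid spacing of a node (the endpoints $\pm1$ being themselves nodes) — which is routine.
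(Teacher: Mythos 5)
Your proof is correct, but note that the paper itself never proves Theorem \ref{previous}: the statement is quoted from \cite{zeller,rivlin} as background (it is deliberately absent from the list of results proved in Appendix \ref{A}), and the paper's actual work goes into a sharper replacement for it. Your argument is the classical Ehlich--Zeller/Rivlin--Cheney route: rescale to $[-1,1]$, center, control the sup-norm $M = \|\hat q\|$ by the grid values using the mean value theorem together with Markov's inequality, then apply Markov once more to the derivative; all steps check out, including the harmless case split on $d^2 \ge n$ and the algebra $c(n - d^2) \le d^2(b_2 - b_1)$ at the end. The paper's closest analogue, the Appendix proof of Theorem \ref{zeller}, differs in exactly one step, and it is the step that matters for everything that follows: instead of your first-order estimate $M \le B + \frac{1}{n}\|\hat q'\| \le B + \frac{d^2}{n}M$, it expands to second order around the maximum point $\xi$, where the first derivative vanishes, so the grid error is controlled by $\frac{1}{2n^2}\|\hat q''\|$, and Markov for the second derivative yields a loss factor $\rho = \frac{d^2(d^2-1)}{6n^2}$ of order $d^4/n^2$ rather than your $d^2/n$. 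That quadratic gain is what powers Theorem \ref{approach} and the paper's main results: as the paper remarks immediately after stating Theorem \ref{previous}, any bound derived through that theorem carries the factor $\sqrt{c/(c+b_2-b_1)} < 1$ and therefore can never recover even Tal's $d^2(f) \ge bs(f)$, let alone improve on it. So your proof is a perfectly good (indeed, essentially the standard) proof of the cited statement, but the second-order refinement at the critical point is the ingredient one needs to go further.
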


However, it is obvious that $ \sqrt{\dfrac{c}{c + b_2 - b_1}} < 1 $, so any bound achieved using this theorem would be weaker than Tal's $ d^2(f) \geq bs(f) $. In order to make progress, we are going to use the following theorem by Ehlich and Zeller, as well as the Markov brothers' inequality:

\begin{theorem}[\cite{zeller}] \label{zeller}
    Let $p$: $ \R \to \R $ be a polynomial of degree $d$. Suppose $n \in \N$ satisfies:
    \begin{enumerate}
        \item $ \rho = \dfrac{d^2(d^2 - 1)}{6n^2} < 1 $, and
        \item $ \forall k = 0, 1, \ldots, n $: $ x_k = -1 + \dfrac{2k}{n} $, $ |p(x_k)| < 1 $
    \end{enumerate}
    then $ ||p|| \leq \dfrac{1}{1 - \rho} $.
\end{theorem}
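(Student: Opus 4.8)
The plan is to avoid the naive Lagrange interpolation bound, which for equally spaced nodes only yields the Lebesgue constant and is exponentially too weak, and instead to combine a second-order Taylor expansion at the extremal point with the Markov brothers' inequality for the second derivative. The guiding observation is that the quantity $\rho = \frac{d^2(d^2-1)}{6n^2}$ appearing in the statement is exactly $\frac{1}{2n^2}$ times the Markov constant $\frac{d^2(d^2-1)}{3}$ that bounds $||p''||$, so the estimate should be engineered to make the quadratic Taylor term carry the whole argument.

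First I would set $M = ||p||$ and choose $\xi \in [-1,1]$ with $|p(\xi)| = M$; replacing $p$ by $-p$ if necessary, I may assume $p(\xi) = M$. If $\xi \in \{-1,1\}$, then $\xi$ coincides with an endpoint node $x_0 = -1$ or $x_n = 1$, so $M = |p(\xi)| < 1 \leq \frac{1}{1-\rho}$ (note $\rho \in [0,1)$ forces $\frac{1}{1-\rho} \geq 1$) and the claim holds trivially. Hence I may assume $\xi$ is interior to $[-1,1]$. Since $\xi$ is then an interior maximum of $p$, we have $p'(\xi) = 0$, which is the crucial fact that kills the linear term in the expansion below.

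Next I would let $x_j$ be the grid node nearest to $\xi$. Because the $n+1$ nodes partition $[-1,1]$ into subintervals of length $2/n$, every point lies within $1/n$ of some node, so $|x_j - \xi| \leq 1/n$. A second-order Taylor expansion of $p$ about $\xi$, using $p'(\xi) = 0$, gives $p(x_j) = M + \frac{1}{2}p''(\zeta)(x_j - \xi)^2$ for some $\zeta$ strictly between $\xi$ and $x_j$, hence $\zeta \in [-1,1]$. Combining $|p(x_j)| < 1$ with $|x_j - \xi| \leq 1/n$ yields $M \leq 1 + \tfrac{1}{2}||p''||\,(x_j-\xi)^2 \leq 1 + \frac{||p''||}{2n^2}$. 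Finally I would invoke the Markov brothers' inequality for the second derivative, $||p''|| \leq \frac{d^2(d^2-1)}{3}\,||p|| = \frac{d^2(d^2-1)}{3}M$ (sharp for the Chebyshev polynomial $T_d$, since $T_d''(1) = \frac{d^2(d^2-1)}{3}$). Substituting gives $M \leq 1 + \frac{d^2(d^2-1)}{6n^2}M = 1 + \rho M$, and solving this linear inequality, which is legitimate precisely because $\rho < 1$, produces $M \leq \frac{1}{1-\rho}$.

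The main obstacle is conceptual rather than computational: one must recognize that the second-order Markov estimate, not a first-order bound or an interpolation argument, is the correct instrument, and that the constant $\rho$ is tuned so that the vanishing of $p'$ at the interior extremum makes the quadratic term alone close the loop. Once this matching is spotted, every remaining step — the reduction to an interior extremum, the nearest-node distance bound $1/n$, the Taylor remainder, and the final algebra — is routine.
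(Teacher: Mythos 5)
Your proof is correct and follows essentially the same route as the paper's own: reduce to an interior extremum where $p'(\xi)=0$, Taylor-expand to second order at the nearest grid node (within $1/n$), bound $\|p''\|$ via the Markov brothers' inequality, and solve the resulting self-referential inequality. The only cosmetic difference is that you work directly with $M=\|p\|$ and the inequality $M \leq 1+\rho M$, whereas the paper introduces the auxiliary quantity $K=\inf\{k : \|p\|\leq 1+k\}$ and derives $K\leq \rho(1+K)$, which is the same algebra.
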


\begin{theorem}[Markov brothers' inequality] \label{markov}
    For any $ p \in \mathcal{P}_d $ and $ k < d $:
    \begin{equation}
        ||p^{(k)}|| \leq \frac{d^2 \cdot (d^2 - 1) \cdot \ldots \cdot (d^2 - k + 1)}{1 \cdot 3 \cdot \ldots \cdot (2k - 1)}||p||
    \end{equation}
\end{theorem}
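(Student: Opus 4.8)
The plan is to reduce the stated inequality to the classical extremal property of the Chebyshev polynomial of the first kind, $T_d$, defined by $T_d(\cos\theta)=\cos(d\theta)$. After rescaling so that $||p||\le 1$, the goal becomes $||p^{(k)}||\le T_d^{(k)}(1)$ together with an explicit evaluation of $T_d^{(k)}(1)$. So I would split the argument into two parts: computing $T_d^{(k)}(1)$, and proving that among all $p\in\mathcal{P}_d$ with $||p||\le 1$ the quantity $||p^{(k)}||$ is maximized by $T_d$, with the maximum attained at the endpoint $x=1$.

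For the evaluation I would start from the Chebyshev differential equation $(1-x^2)T_d''-xT_d'+d^2T_d=0$. Differentiating it $m$ times by Leibniz's rule yields
\[ (1-x^2)T_d^{(m+2)}-(2m+1)x\,T_d^{(m+1)}+(d^2-m^2)T_d^{(m)}=0, \]
and evaluating at $x=1$, where the first term vanishes, gives the recurrence $T_d^{(m+1)}(1)=\frac{d^2-m^2}{2m+1}\,T_d^{(m)}(1)$. Since $T_d(1)=1$, this telescopes to
\[ T_d^{(k)}(1)=\prod_{m=0}^{k-1}\frac{d^2-m^2}{2m+1}=\frac{d^2(d^2-1)(d^2-4)\cdots\bigl(d^2-(k-1)^2\bigr)}{1\cdot 3\cdots(2k-1)}. \]
Because $m^2\ge m$ for integers $m\ge 0$, we have $d^2-m^2\le d^2-m$, so this product is bounded above by $\frac{d^2(d^2-1)\cdots(d^2-k+1)}{1\cdot 3\cdots(2k-1)}$, which is exactly the coefficient in the statement; thus the sharp bound $||p^{(k)}||\le T_d^{(k)}(1)$ already implies the theorem, and for $k\le 2$ the two coefficients even coincide.

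For the extremal property I would first settle the base case $k=1$, A. Markov's inequality $||p'||\le d^2||p||$: Bernstein's inequality gives $|p'(x)|\le d/\sqrt{1-x^2}$ on the open interval, which already beats $d^2$ for $|x|\le\sqrt{1-1/d^2}$, while a separate Chebyshev-comparison argument supplies the endpoint bound $d^2$ in the remaining boundary layer near $x=\pm1$. The comparison tool is Chebyshev's theorem: if $||q||\le 1$ then $|q(y)|\le|T_d(y)|$ for every $|y|\ge 1$, which I would prove by assuming the contrary at some $y_0>1$, forming $\lambda T_d-q$ with $\lambda$ chosen so that $y_0$ becomes a zero, and using the equioscillation of $T_d$ at the $d+1$ nodes $\cos(j\pi/d)$ to force more than $d$ zeros and hence a contradiction. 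For general $k$ the same circle of ideas (V.~A. Markov's theorem) identifies $\pm T_d$ as the extremizer and shows that the maximum of $|T_d^{(k)}|$ on $[-1,1]$ is attained at the endpoint.

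The main obstacle is precisely this last step for $k\ge 2$. Naively iterating the first-order inequality gives only $||p^{(k)}||\le d^2(d-1)^2\cdots(d-k+1)^2\,||p||$, and since $d-1>\tfrac{d+1}{3}$ for $d\ge 3$, this constant already exceeds the target at $k=2$; so the degrees of freedom lost in iteration must be recovered by controlling where the extrema of $p^{(k)}$ lie. The delicate point is to show that these extrema migrate to the endpoints, which requires tracking the sign changes of $p^{(k)}$ and an inductive comparison against $T_d^{(k)}$ rather than a term-by-term estimate.
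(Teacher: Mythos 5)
Your setup is correct as far as it goes: the evaluation $T_d^{(k)}(1)=\prod_{m=0}^{k-1}\frac{d^2-m^2}{2m+1}$ via the $m$-times-differentiated Chebyshev ODE is right, and so is the observation that $d^2-m^2\le d^2-m$ makes the paper's constant a weakening of the sharp V.~A.~Markov constant (with equality for $k\le 2$). The $k=1$ outline (Bernstein's inequality in the bulk, a Chebyshev growth-comparison near $\pm 1$) is the standard proof of A.~Markov's inequality and would go through. But the theorem is stated for all $k<d$, and for $k\ge 2$ the entire burden rests on the claim that $\pm T_d$ maximizes $\|p^{(k)}\|$ over the unit ball of $\mathcal{P}_d$ --- that claim \emph{is} V.~A.~Markov's theorem, and your proposal invokes it rather than proves it. Your own diagnosis is accurate: iterating the first-order inequality already overshoots at $k=2$ (as you compute, $3(d-1)>(d+1)$ for $d\ge 3$), and a pointwise comparison $|p^{(k)}(x)|\le |T_d^{(k)}(x)|$ on $[-1,1]$ is simply false in the interior, where $T_d^{(k)}$ has zeros; so ``tracking sign changes'' and ``inductive comparison against $T_d^{(k)}$'' is a label for the hard part, not an argument. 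Nothing in the proposal supplies the actual mechanism --- e.g., Markov's original zero-counting/interpolation analysis at the equioscillation points of $T_d^{(k)}$, or the Duffin--Schaeffer approach --- so the proof is missing its core for every $k\ge 2$, which is precisely the range the paper needs (it applies the inequality with $k$ up to $5$).

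For context, the paper itself offers no proof of this statement: it explicitly defers, remarking that the proof of the Markov brothers' inequality is ``significantly harder'' than the results proved in its appendix and citing two book-proofs in \cite{markov}. So your instinct about where the difficulty sits matches the paper's stance exactly, but as a self-contained proof attempt yours is incomplete. The honest repair is either to cite V.~A.~Markov's theorem outright --- at which point your ODE computation of $T_d^{(k)}(1)$ together with the bound $d^2-m^2\le d^2-m$ is a clean, complete derivation of the paper's (weaker) statement --- or to commit to carrying out one of the known proofs of the extremal property in full; there is no known shortcut that recovers the $(2k-1)!!$ denominator without that machinery.
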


A Boolean function $f$ is called fully sensitive at 0 iff $ f(0) = 0 $ and $ s(f, 0) = n $. The next theorem by Nisan and Szegedy explains why it is enough for us to focus only on fully sensitive functions.

\begin{theorem}[\cite{nisan_szegedy}] \label{reduction}
    For every Boolean function $f$ there exists the fully sensitive at 0 function $ \tilde{f} $ that depends on $ bs(f) $ variables and $ d(\tilde{f}) \leq d(f) $.  
\end{theorem}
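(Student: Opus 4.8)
The plan is to build $\tilde{f}$ explicitly by contracting each sensitive block of an optimal input into a single variable. First I would normalize the situation. Pick an input $x^*$ realizing the block sensitivity, so that there are disjoint blocks $R_1, \ldots, R_b$ with $b = bs(f)$ and $f(x^*) \neq f((x^*)^{(R_i)})$ for every $i$. Replacing each coordinate $x_i$ by $1 - x_i$ for those $i$ with $x^*_i = 1$ turns $x^*$ into the all-zeros input; this is a coordinatewise affine substitution, so it preserves multilinearity and leaves the degree of the representing polynomial unchanged. If $f(x^*) = 1$, I replace $f$ by $1 - f$, which has the same degree and the same block structure. After these reductions I may assume $f(0) = 0$ and $f(0^{(R_i)}) = 1$ for each $i$.

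Next I would define $\tilde{f}$: $\{0,1\}^b \to \{0,1\}$ by block contraction. Given $y = (y_1, \ldots, y_b)$, set $R = \bigcup_{i : y_i = 1} R_i$ and $\tilde{f}(y) = f(0^{(R)})$; equivalently, each coordinate of $f$ lying in $R_i$ is fed the value $y_i$, while every coordinate outside $\bigcup_i R_i$ is fixed to $0$. Then $\tilde{f}(0) = f(0) = 0$, and evaluating $\tilde{f}$ at the $i$-th standard basis vector gives $f(0^{(R_i)}) = 1$. Hence $\tilde{f}$ flips on each single-coordinate change away from $0$, so $s(\tilde{f}, 0) = b$ and $\tilde{f}$ genuinely depends on all $b = bs(f)$ variables.

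It remains to bound the degree. Let $p(x_1, \ldots, x_n)$ be the multilinear polynomial representing $f$, of degree $d(f)$. The contraction is precisely the substitution $x_j \mapsto y_{i(j)}$, where $i(j)$ is the block containing $j$, together with $x_j \mapsto 0$ for unblocked $j$. Substituting variables for variables or for the constant $0$ cannot raise the total degree, so the resulting polynomial $q(y)$ satisfies $\deg q \leq d(f)$. Because several coordinates of one block collapse to the same variable, $q$ need not be multilinear; but on Boolean inputs $y_i^k = y_i$, so multilinearizing $q$ — which can only decrease the degree — produces the unique multilinear representation of $\tilde{f}$. Therefore $d(\tilde{f}) \leq \deg q \leq d(f)$.

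The one step that needs care is this last degree bound, namely checking that collapsing an entire block to a single variable and then multilinearizing never increases the degree; this is the main thing to verify. It follows monomial by monomial: a degree-$k$ monomial $\prod_{j \in S} x_j$ maps either to $0$ (if $S$ meets an unblocked coordinate) or to $\prod_i y_i^{|S \cap R_i|}$, whose multilinearization $\prod_{i : S \cap R_i \neq \emptyset} y_i$ has degree equal to the number of distinct blocks met by $S$, which is at most $|S| = k$. Keeping this contraction consistent with the earlier normalization is the only delicate bookkeeping in the argument.
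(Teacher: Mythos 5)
Your proposal is correct and is essentially the paper's own argument: the paper defines $\tilde{f}(y_1,\ldots,y_t) = f(x \oplus y_1S_1 \oplus \ldots \oplus y_tS_t)$, which is exactly your normalization (flipping coordinates where $x^*$ is 1, complementing $f$ if needed) composed with your block contraction, and the sensitivity check is identical. The only difference is presentational — the paper dismisses the degree bound with the phrase ``linear substitution,'' whereas you spell out the monomial-by-monomial multilinearization argument, which is a worthwhile level of care but not a different route.
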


The simple proofs of Theorems \ref{symmetrization}, \ref{zeller} and \ref{reduction} are given in Appendix \ref{A}. While the proof of the Markov brothers' inequality is significantly harder, two “book-proof”s of it are given in \cite{markov}. 

\section{Exact Polynomials} \label{sect3}

In this section we study the separation between the degree of a Boolean function as an exact polynomial and the block sensitivity. The result organized as follows. Firstly, we will prove the warm-up result, in which we introduce a new approach for bounding degree of a polynomial. Secondly, we will prove a series of lemmas for the main result and prove Theorem \ref{main}.

\subsection{Warm-up}

In order to show new techniques, first we are going to prove a simpler result, which, however, is still better than the previously known upper bound for this separation:

\begin{theorem} \label{warm-up}
    For all Boolean functions $ \{0, 1\}^n \to \{0, 1\} $, we have
    \begin{equation}
        d^2(f) \geq \sqrt{6/5}bs(f) \simeq 1.09bs(f)
    \end{equation}
\end{theorem}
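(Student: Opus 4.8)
The plan is to reduce to a structured function via Theorem~\ref{reduction}, symmetrize, and then extract a contradiction from the interplay of the Ehlich--Zeller bound (Theorem~\ref{zeller}) and the Markov brothers' inequality (Theorem~\ref{markov}), the key being to apply the latter to the \emph{second} derivative rather than the first.

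First I would invoke Theorem~\ref{reduction} to replace $f$ by a function that is fully sensitive at $0$, depends on $n = bs(f)$ variables, and has degree $d \le d(f)$; proving $d^2 \ge \sqrt{6/5}\,n$ for this function then suffices. Let $p$ be its multilinear representation and apply Lemma~\ref{symmetrization} to obtain a univariate $\tilde p$ of degree at most $d$ with $\tilde p(k) \in [0,1]$ for every integer $0 \le k \le n$ (the fraction of weight-$k$ inputs on which the function is $1$). Full sensitivity at $0$ forces $\tilde p(0) = 0$ and $\tilde p(1) = 1$. Rescaling the domain $[0,n]$ to $[-1,1]$ and centering the range, set $g(y) = 2\tilde p\bigl(\tfrac{n(1+y)}{2}\bigr) - 1$; then $\deg g \le d$, the grid points $x_k = -1 + 2k/n$ satisfy $|g(x_k)| \le 1$, and in particular $g(x_0) = -1$, $g(x_1) = 1$, while $g(x_2) \le 1$.

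The crucial decision is which derivative to use. A naive first-order argument — taking $g'(\eta) = n$ from the jump between $x_0$ and $x_1$ and combining $||g'|| \le d^2\,||g||$ with the Ehlich--Zeller bound $||g|| \le (1-\rho)^{-1}$ — only yields $d^2 \ge (\sqrt{15}-3)\,n \simeq 0.87\,n$, which is \emph{weaker} than Tal's bound and hence useless. Instead I would pass to the second difference. Since $g(x_0) - 2g(x_1) + g(x_2) = g(x_2) - 3 \le -2$ and the grid spacing is $h = 2/n$, the mean value theorem for divided differences produces $\zeta \in (x_0,x_2) \subseteq [-1,1]$ with $g''(\zeta) = \bigl(g(x_0)-2g(x_1)+g(x_2)\bigr)/h^2$, whence $|g''(\zeta)| \ge 2/h^2 = n^2/2$ and therefore $||g''|| \ge n^2/2$.

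Finally I would close the argument quantitatively. Write $\rho = \frac{d^2(d^2-1)}{6n^2}$; the case $\rho \ge 1$ already gives $d^2(d^2-1)\ge 6n^2$, hence $d^2 \ge \sqrt 6\,n > \sqrt{6/5}\,n$, so assume $\rho < 1$. Then Theorem~\ref{zeller} gives $||g|| \le (1-\rho)^{-1}$, and Theorem~\ref{markov} with $k=2$ gives $||g''|| \le \frac{d^2(d^2-1)}{3}\,||g|| = \frac{2\rho n^2}{1-\rho}$. Combining with $||g''|| \ge n^2/2$ yields $\tfrac12 \le \tfrac{2\rho}{1-\rho}$, i.e. $\rho \ge \tfrac15$, which unwinds to $d^2(d^2-1) \ge \tfrac65 n^2$ and hence $d^4 \ge \tfrac65 n^2$, giving $d^2 \ge \sqrt{6/5}\,n = \sqrt{6/5}\,bs(f)$. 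The main obstacle, beyond the routine reduction and symmetrization, is exactly the recognition that the first derivative is too lossy and that the second divided difference — fed through the $k=2$ Markov inequality — is what converts the Ehlich--Zeller threshold into the clean condition $\rho \ge \tfrac15$; a minor technical point is that Theorem~\ref{zeller} is stated with strict inequalities $|g(x_k)| < 1$, so I would apply it to $(1-\varepsilon)g$ and let $\varepsilon \to 0$.
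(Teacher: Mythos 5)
Your proposal is correct and follows essentially the same route as the paper: reduce via Theorem~\ref{reduction}, symmetrize, extract a second-derivative lower bound from the values at $0,1,2$ (your divided-difference/mean-value argument is the same as the paper's quadratic-interpolation-plus-Rolle step), and close by combining Theorem~\ref{zeller} with the $k=2$ case of Theorem~\ref{markov}. The only differences are presentational: you inline what the paper packages as Theorem~\ref{approach}, retain the factor $d^2(d^2-1)$ where the paper relaxes it to $d^4$, and explicitly handle the strict-inequality technicality in Theorem~\ref{zeller}, all arriving at the same constant $\sqrt{6/5}$.
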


First of all, we need to derive a new approach to bound the degree of a polynomial that would be stronger than Theorem \ref{previous}.

\begin{theorem} \label{approach}
    Let $p$: $ \R \to \R $ be a polynomial of degree $d$ such that:
    
    \begin{enumerate}
        \item $ \forall i = 0, 1, \ldots, n $: $ 0 \leq p(i) \leq 1 $, and
        \item $ \sup_{x \in [0; n]} |p^{(k)}(x)| \geq c $.
    \end{enumerate}
    
    Then either $ d^2 \geq \sqrt{6}n $ or the ratio $ x = \dfrac{d^2}{n} $ satisfies the following inequality:
    \begin{equation} \label{derivative_bound}
        \left( 1 - \frac{x^2}{6} \right) \frac{(2k - 1)!!}{2^{k - 1}}c < x^k
    \end{equation}
    
    $ (2k - 1)!! $ denotes the double factorial: $ (2k - 1)!! = (2k - 1) \cdot (2k - 3) \cdot \ldots \cdot 3 \cdot 1 $.
    
\end{theorem}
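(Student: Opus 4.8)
The plan is to reduce everything to the interval $[-1,1]$ and then play the Ehlich--Zeller bound (Theorem \ref{zeller}) against the Markov brothers' inequality (Theorem \ref{markov}). Introduce the rescaled polynomial
\[ q(t) = 2p\!\left(\tfrac{n(t+1)}{2}\right) - 1, \]
which sends the grid points $x_i = i$, $i = 0,\dots,n$, to the equally spaced nodes $t_i = -1 + \tfrac{2i}{n}$ of Theorem \ref{zeller}, and maps $[0;n]$ bijectively onto $[-1;1]$. The hypothesis $0 \le p(i) \le 1$ becomes $|q(t_i)| \le 1$, and $\deg q = \deg p = d$. Set $x = d^2/n$ and $\rho = \tfrac{d^2(d^2-1)}{6n^2}$; a direct computation gives $\rho = \tfrac{x^2}{6} - \tfrac{x}{6n} < \tfrac{x^2}{6}$, so the condition $\rho < 1$ of Theorem \ref{zeller} holds whenever $x < \sqrt 6$, i.e. $d^2 < \sqrt 6\, n$. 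This is the source of the stated dichotomy: if $d^2 \ge \sqrt 6\, n$ we are already in the first alternative, so we may assume $d^2 < \sqrt 6\, n$ and hence $\rho < 1$.

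In this case Theorem \ref{zeller} yields $\|q\| \le \tfrac{1}{1-\rho}$. Next I would translate the derivative hypothesis: by the chain rule $q^{(k)}(t) = 2\,(n/2)^k\, p^{(k)}\!\big(\tfrac{n(t+1)}{2}\big)$, so
\[ \|q^{(k)}\| = 2\,(n/2)^k \sup_{x\in[0;n]} |p^{(k)}(x)| \ge \frac{n^k}{2^{k-1}}\, c. \]
On the other hand, Theorem \ref{markov}, together with $1\cdot 3 \cdots (2k-1) = (2k-1)!!$ and the crude estimate $d^2(d^2-1)\cdots(d^2-k+1) \le d^{2k}$, gives $\|q^{(k)}\| \le \tfrac{d^{2k}}{(2k-1)!!}\,\|q\| \le \tfrac{d^{2k}}{(2k-1)!!}\cdot \tfrac{1}{1-\rho}$. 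Chaining the lower and upper bounds on $\|q^{(k)}\|$ and rearranging yields
\[ (1-\rho)\,\frac{(2k-1)!!}{2^{k-1}}\, c \le \frac{d^{2k}}{n^k} = x^k, \]
after which replacing $1-\rho$ by the strictly smaller quantity $1-\tfrac{x^2}{6} > 0$ (valid since $x<\sqrt6$) produces exactly inequality \eqref{derivative_bound}.

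The only delicate point is that Theorem \ref{zeller} requires the strict bound $|q(t_i)| < 1$, whereas our hypothesis only gives $|q(t_i)| \le 1$, with equality occurring when some $p(i) \in \{0,1\}$. I would handle this by a standard limiting argument: apply Theorem \ref{zeller} to $(1-\varepsilon)q$, which satisfies $|(1-\varepsilon)q(t_i)| \le 1-\varepsilon < 1$, to obtain $\|q\| \le \tfrac{1}{(1-\varepsilon)(1-\rho)}$, and then let $\varepsilon \to 0^{+}$. The strictness of the final inequality \eqref{derivative_bound} comes for free from $\rho < \tfrac{x^2}{6}$ (a strict inequality) whenever $c>0$, and is immediate when $c=0$ since $x^k>0$. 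Apart from this technicality the argument is essentially careful bookkeeping of the rescaling constants, so the main effort lies in verifying the factor $n^k/2^{k-1}$ produced by the $k$-fold chain rule and matching it correctly against the Markov factor $(2k-1)!!$.
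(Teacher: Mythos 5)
Your proof is correct and follows essentially the same route as the paper: rescale the grid $\{0,\dots,n\}$ to the equispaced nodes of Theorem \ref{zeller}, then chain the Ehlich--Zeller bound against the Markov brothers' inequality and substitute $\rho < x^2/6$. The only differences are cosmetic --- you normalize to values in $[-1,1]$ via $q = 2P$ where the paper uses $P(x) = p(\tfrac{n}{2}(x+1)) - \tfrac12$ with $\|P\|$ bounded by $\tfrac{1}{2(1-\rho)}$ --- and your $\varepsilon$-limiting argument actually treats the strict-inequality hypothesis of Theorem \ref{zeller} more carefully than the paper does.
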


\begin{proof}
    Suppose $ d^2 < \sqrt{6}n $, otherwise the first statement holds. In terms of Theorem \ref{zeller}:
    \begin{equation} \label{rho}
        \rho = \frac{d^2(d^2 - 1)}{6n^2} < \frac{d^4}{6n^2} < 1
    \end{equation}
    
    Let $ P(x) $ be defined as follows:
    \[ P(x) = p\left( \frac{n}{2}(x + 1)\right) - \frac{1}{2} \]
    
    By Theorem \ref{zeller}, $ ||P|| \leq \dfrac{1}{2} \cdot \dfrac{1}{1 - \rho} $. On the other hand, $ ||P^{(k)}|| \geq \dfrac{n^k}{2^k} \cdot c $. Combined with Inequality \ref{markov}, we get
    \[ c\left( \frac{n}{2} \right)^k \leq ||P^{(k)}|| \leq \frac{d^2 \cdot (d^2 - 1) \cdot \ldots \cdot (d^2 - k + 1)}{1 \cdot 3 \cdot \ldots \cdot (2k - 1)}||P|| < \frac{d^{2k}}{(2k - 1)!!} \cdot \frac{1}{2(1 - \rho)} \]
    \begin{equation} \label{derivative_proof}
        (1 - \rho) \frac{(2k - 1)!!}{2^{k - 1}}c \leq \left( \frac{d^2}{n} \right)^k
    \end{equation}
    
    By substituting \eqref{rho} and $ x = \dfrac{d^2}{n} $ into \eqref{derivative_proof}, we obtain exactly the inequality \eqref{derivative_bound}.
    
\end{proof}

The same approach was used by Beigel \cite[lemma 3.2]{beigel}, however, he didn't parameterized his result and proved it only for the first derivative. If we apply his result with a trivial bound $\sup_{x \in [0; n]} |p'(x)| \geq 1$, it follows that $ 1 - \dfrac{x^2}{6} < x $ for $ x = \dfrac{d^2}{n} $. As $ x > 0 $, the solution is $ x \geq \sqrt{15} - 3 \simeq 0.87 $, which is stronger than the original bound, but weaker than the Tal's result.

Now we are ready to prove the warm-up result.

\begin{proof}[Theorem \ref{warm-up}]
    Because of reduction \ref{reduction}, we can assume without loss of generality that $f$ is fully sensitive at 0, and so a polynomial $ p $ derived from Lemma $\ref{symmetrization}$ satisfies $ p(0) = 0 $ and $ p(1) = 1 $. Also, it is obvious that if $ \rho $ from Theorem \ref{zeller} is not less than $1$, then $ d^2 \geq \sqrt{6}n > \sqrt{6/5}n $, so we assume that $ \rho < 1 $. 

    Suppose $ p \in \mathcal{P}_2 $, i.e. $ p(x) = ax^2 + bx + c $. From the values of $ p(0), \ p(1) $ and $ p(2) $ we obtain the following constrains:
    \begin{enumerate}
        \item $ p(0) = c = 0 \Ra c = 0 $.
        \item $ p(1) = a + b + c = 1 \Ra a + b = 1 $.
        \item $ p(2) = 4a + 2b \Ra -2 \leq 2a \leq -1 $.
    \end{enumerate}
    As a result, $ |p''(x)| = |2a| \geq 1 $.
    
    In general case, let $ q(x) $ be the quadratic polynomial that equals to $ p(x) $ at $ x \in \{ 0,1,2 \} $, and $ \tilde{p}(x) = p(x) - q(x) $. From our definition it follows that $ \tilde{p}(0) = \tilde{p}(1) = \tilde{p}(2) = 0 $, so there exists $ \xi \in [0; 2] $: $ \tilde{p}''(\xi) = 0 $. Then $ |p''(\xi)| = |q''(\xi) + \tilde{p}''(\xi)| \geq 1 $, and as a direct consequence $ \sup_{x \in [0; n]} |p''(\xi)| \geq 1 $. Substituting $ c = 1 $ and $ k = 2 $ in \eqref{derivative_bound}, we obtain the following inequality for $ x = \dfrac{d^2}{n} $:
    \begin{equation}
        \left( 1 - \frac{x^2}{6} \right) \frac{3}{2} < x^2
    \end{equation}
    
    Since $ x \geq 0 $, we get $ x \geq \sqrt{6/5} $ and $ d^2 \geq \sqrt{6/5}n $.
    
\end{proof}

\subsection{Proof of Theorem \ref{main}}

To increase the constant factor from $ \sqrt{6/5} $ to $ \sqrt{10} - 2 $, we should analyze higher order derivatives. In order to do so, we are going to use representation \eqref{representation}. We also need to establish a series of lemmas. 

\begin{lemma} \label{c3_bound}
    Suppose $f$: $ \{0, 1\}^n \to \{0, 1\} $ is fully sensitive at 0 and $ n \geq 4 $; then for symmetrization polynomial $p$ we have $ \sup_{x \in [0; n]} |p'''(x)| \geq 1 - p(3) $.
\end{lemma}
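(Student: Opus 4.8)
The plan is to mimic the warm-up argument but one order higher: instead of a quadratic interpolant and the second derivative, I would use cubic interpolants and the third derivative, extracted via the mean value theorem for divided differences (equivalently, applying Rolle three times to $p$ minus its cubic interpolant). The starting data are the same as in the warm-up: full sensitivity at $0$ gives $p(0)=0$ and $p(1)=1$, and by the interpretation of $\tilde p$ following Lemma~\ref{symmetrization} we have $0\le p(k)\le 1$ for every integer $0\le k\le n$. The key fact I would invoke is that for four consecutive integer nodes $a,a+1,a+2,a+3$ there exists $\xi\in[a,a+3]$ with $p'''(\xi)$ equal to the third forward difference $\Delta^3 p(a)=p(a+3)-3p(a+2)+3p(a+1)-p(a)$, so that controlling finite differences controls $\sup|p'''|$.

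The naive attempt uses only the window $\{0,1,2,3\}$, giving a point $\xi_1$ with $p'''(\xi_1)=\Delta^3 p(0)=p(3)-3p(2)+3$. This is \emph{not} enough: when the symmetrization takes values $p(0),p(1),p(2),p(3)=0,1,1,0$ the cubic interpolant degenerates to a quadratic and this third difference vanishes, while the lemma demands $\sup|p'''|\ge 1$. This is exactly the obstacle I expect to be central, and it is what forces the hypothesis $n\ge 4$. The remedy is to use a \emph{second}, overlapping window $\{1,2,3,4\}$, which is available precisely because $n\ge4$, producing $\xi_2\in[1,4]\subseteq[0,n]$ with $p'''(\xi_2)=\Delta^3 p(1)=p(4)-3p(3)+3p(2)-1$.

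The decisive computation is then that these two third differences telescope when added: writing $D_1=\Delta^3 p(0)$ and $D_2=\Delta^3 p(1)$, the cross terms in $p(2)$ cancel and the $p(3)$ terms combine, giving
\[
D_1+D_2 = p(4)-2p(3)+2 \;\ge\; 2\bigl(1-p(3)\bigr),
\]
where the inequality uses only $p(4)\ge 0$. Since the maximum of two numbers is at least their average, I would conclude
\[
\sup_{x\in[0;n]}|p'''(x)| \;\ge\; \max\bigl(|D_1|,|D_2|\bigr) \;\ge\; \frac{D_1+D_2}{2} \;\ge\; 1-p(3),
\]
which is the claim. The remaining care points are routine: verifying the divided-difference mean value theorem in the stated form (or equivalently setting up the Rolle argument on $p-q$ with $q$ the cubic interpolant, exactly parallel to the warm-up), and checking that both witnesses $\xi_1,\xi_2$ genuinely lie in $[0,n]$, for which $n\ge4$ is both necessary and sufficient.
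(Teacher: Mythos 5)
Your proof is correct, and it executes the interpolation idea differently from the paper. The paper makes a single application of the Rolle-on-$p$-minus-interpolant argument, but on the \emph{non-consecutive} node set $\{0,1,3,4\}$: since $p(0)=0$ and $p(1)=1$, the cubic interpolant $q$ on these nodes has constant third derivative $q''' = 1 - p(3) + p(4)/2 \geq 1 - p(3)$, and one application of Rolle to $p-q$ gives a single witness $\xi \in [0,4]$ with $|p'''(\xi)| \geq 1 - p(3)$. Omitting the node $2$ is the paper's way of killing the dependence on $p(2)$ — exactly the obstacle you identified with the values $0,1,1,0$ — whereas you kill it by telescoping the two consecutive-window differences $D_1$ and $D_2$. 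In fact the two computations agree at the level of linear functionals: your average $\tfrac{1}{2}(D_1+D_2)$ puts weights $\left(-\tfrac{1}{2},\, 1,\, 0,\, -1,\, \tfrac{1}{2}\right)$ on $(p(0),\dots,p(4))$, which is precisely $3!$ times the divided difference of $p$ on $\{0,1,3,4\}$, i.e. the paper's quantity $1 - p(3) + p(4)/2$. So both proofs bound the same number; the structural difference is that the paper needs one interpolant and one mean-value witness, while you need two witnesses $\xi_1,\xi_2$ plus the elementary step $\max(|D_1|,|D_2|) \geq \tfrac{1}{2}(D_1+D_2)$. What your route buys is that no clever non-equispaced node set has to be guessed — the cancellation of $p(2)$ falls out mechanically from overlapping windows; what the paper's route buys is a slightly shorter argument whose shape exactly parallels the warm-up proof of Theorem \ref{warm-up}, with $\{0,1,2\}$ replaced by $\{0,1,3,4\}$.
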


\begin{lemma} \label{sum_bound}
	\begin{equation}
	    \sum_{k = 4}^{\infty} \frac{1}{k2^{k - 2}} < \frac{1}{8}
	\end{equation}
\end{lemma}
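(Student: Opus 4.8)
The plan is to reduce this series to a geometric one by exploiting that the factor $1/k$ is decreasing. First I would rewrite the summand as $\frac{1}{k\,2^{k-2}} = \frac14\cdot\frac{1}{k}\cdot\frac{1}{2^{k-2}}$, so that the whole sum factors cleanly once $1/k$ is replaced by its largest value on the range of summation $k \geq 4$.

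Since $\frac1k \leq \frac14$ for every $k \geq 4$, I would bound
\[ \sum_{k=4}^{\infty}\frac{1}{k\,2^{k-2}} \leq \frac14\sum_{k=4}^{\infty}\frac{1}{2^{k-2}} = \frac14\sum_{j=2}^{\infty}\frac{1}{2^{j}} = \frac14\cdot\frac12 = \frac18, \]
where the middle step is the substitution $j=k-2$ and the last is the standard geometric sum $\sum_{j\geq 2}2^{-j}=1/2$. This already yields the correct numerical value; the only remaining point is to upgrade the $\leq$ to the strict $<$ demanded by the statement.

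The strictness is where I would take care. The comparison $\frac1k\leq\frac14$ is an equality only at $k=4$ and is strict for every $k\geq 5$; since the terms $\frac{1}{k\,2^{k-2}}$ and $\frac14\cdot\frac{1}{2^{k-2}}$ are $\leq$ termwise everywhere and differ by a strictly positive amount at $k=5$, the difference of the two convergent series is a sum of nonnegative terms with at least one positive, hence strictly positive, so the inequality is strict. Equivalently, I can split off the $k=4$ term (which equals $1/16$) and bound the tail $k\geq 5$ by $\frac14\sum_{j\geq 3}2^{-j}=\frac14\cdot\frac14=\frac1{16}$, with the strictness coming from $\frac15<\frac14$; this gives $S<\frac1{16}+\frac1{16}=\frac18$.

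As a cross-check I would also record the closed form: writing $\frac{1}{k2^{k-2}} = 4\cdot\frac{(1/2)^k}{k}$ and using $-\ln(1-x)=\sum_{k\geq 1}x^k/k$ at $x=1/2$ gives $S = 4\bigl(\ln 2 - \tfrac12 - \tfrac18 - \tfrac1{24}\bigr) = 4\ln 2 - \tfrac83$. Then $S<1/8$ is equivalent to $\ln 2 < 67/96 \approx 0.6979$, which holds since $\ln 2 \approx 0.6931$. This route is less elementary because it rests on a numerical estimate of $\ln 2$, so I would present the geometric argument as the actual proof and keep the logarithmic identity only as confirmation. There is no real obstacle in this lemma; the single subtle step is the justification of strictness, which I have isolated above.
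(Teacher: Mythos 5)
Your proof is correct, but it takes a genuinely different route from the paper. The paper's own proof is exactly what you relegated to a cross-check: it substitutes $x=-\tfrac12$ into the Maclaurin series $\ln(1+x)=\sum_{k\geq 1}\tfrac{(-1)^{k+1}}{k}x^k$ to get $\sum_{k\geq 1}\tfrac{1}{k2^k}=\ln 2$, concludes that the sum equals $4\left(\ln 2-\tfrac23\right)$, and then asserts this is less than $\tfrac18$ --- a step which, as you correctly observe, silently relies on the numerical estimate $\ln 2<\tfrac{67}{96}$. Your main argument, the termwise comparison $\tfrac1k\leq\tfrac14$ for $k\geq 4$ followed by the geometric sum $\tfrac14\sum_{j\geq 2}2^{-j}=\tfrac18$, is entirely elementary: it needs no power series, no transcendental constant, and no unverified numerical inequality. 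Your handling of strictness (equality in $\tfrac1k\leq\tfrac14$ only at $k=4$, with a strictly positive deficit at $k=5$) is exactly the right point to isolate, since the geometric bound alone gives only $\leq\tfrac18$ with the constant being sharp in the limit of the comparison. In short: the paper's route yields the exact value $4\ln 2-\tfrac83$ of the sum but proves strictly less than it states unless one also certifies a rational upper bound on $\ln 2$; your route gives up the exact value but delivers a self-contained proof of the stated inequality.
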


The proofs are omitted to Appendix \ref{A}. With this lemmas we are ready for the main proof.

\begin{proof}[Theorem \ref{main}]
    If $ n < 4 $, then the theorem follows from Tal's bound $ d^2(f) \geq bs(f) $. As in Theorem \ref{warm-up}, we assume that $ f $ is fully sensitive at $0$ and $ \rho < 1 $.

    It is easy to show that
    \[ \binom{x}{k}' \bigg|_{x = 0} = \frac{(-1)^{k + 1}}{k}, \quad k \in \N \]
    
    If we combine this with representation \eqref{representation}, we get the formula for the first derivative of the symmetrization polynomial at $x = 0$:
    \begin{equation} \label{derivative}
        p'(0) = \sum_{k = 1}^d (-1)^{k + 1} \frac{c_k}{k}
    \end{equation}
    
    We can bound the first three coefficients:
    
    \begin{enumerate}
        \item $ p(1) = c_1 = 1 \Ra c_1 = 1 $.
        \item $ p(2) = 2c_1 + c_2 \leq 1 \Ra c_2 \leq -1 $.
        \item $ p(3) = 3c_1 + 3c_2 + c_3 \geq 0 \Ra c_3 \geq p(3) $.
    \end{enumerate}
    
    If $ p(3) < \dfrac{3}{8} $, then by Lemma \ref{c3_bound} $ \sup_{x \in [0; n]} |p'''(x)| > \dfrac{5}{8} $. Substituting $ c = \dfrac{5}{8} $ and $ k = 3 $ in \eqref{derivative_bound}, we get the inequality for $ x = \dfrac{d^2}{n} $:
    
    \[ \left( 1 - \frac{x^2}{6} \right) \frac{75}{32} < x^3 \]
    
    Inequality implies that $ x > 1.2 $, which satisfies the statement of the theorem. In remaining case, we have $ c_3 \geq \dfrac{3}{8} $. Substituting all the constrains in \eqref{derivative}, we obtain:
    \begin{equation}
        p'(0) \geq \frac{3}{2} + \frac{1}{8} + \sum_{k = 4}^d (-1)^{k + 1} \frac{c_k}{k}
    \end{equation}
    
    Suppose $ |c_k| < \dfrac{1}{2^{k - 2}} $ for $k > 3$, then
    \[ p'(0) \geq \frac{3}{2} + \frac{1}{8} - \sum_{k = 4}^d \frac{1}{k2^{k - 2}} > \frac{3}{2} + \frac{1}{8} - \sum_{k = 4}^\infty \frac{1}{k2^{k - 2}} > \frac{3}{2} + \frac{1}{8} - \frac{1}{8} = \frac{3}{2} \]
    
    Inequality \eqref{derivative_bound} with $ c = \dfrac{3}{2} $ and $ k = 1 $ implies $ \left(1 - \dfrac{x^2}{6}\right) \dfrac{3}{2} \leq x $ for $ x = \dfrac{d^2}{n} $. The solution is $ x \geq (\sqrt{10} - 2) $, which means that $ d^2 \geq (\sqrt{10} - 2)n $.
    
    The only case left to consider is if there exists such $k > 3$ that $ |c_k| \geq \dfrac{1}{2^{k - 2}} $. To deal with it, we first need to show that $ \sup_{x \in [0; n]} |p^{(k)}(x)| \geq c_k $. If $ d = k $, then the derivative is a constant and equals to $c_k$ because $ \binom{x}{k}^{(k)} = 1 $. In other case, let $ q(x) $ be the polynomial that consists of all the terms from \eqref{representation} up to one with the $ c_k $. Then $ \tilde{p}(x) = p(x) - q(x) $ equals to 0 for $ x = 0, 1, \ldots, k $, so $ \exists \xi \in [0; k]$: $ \tilde{p}^{(k)}(\xi) = 0 $ and $ |p^{(k)}(\xi)| = |c_k| $.
    
    Now, for $ k = 4, 5 $ we obtain the following inequalities from \eqref{derivative_bound}:
    \[ \left(1 - \frac{x^2}{6}\right) \frac{105}{32} \leq x^4 \quad \Ra \quad x > 1.24 > (\sqrt{10} - 2) \]
	\[ \left(1 - \frac{x^2}{6}\right) \frac{945}{128} \leq x^5 \quad \Ra \quad x > 1.3 > (\sqrt{10} - 2) \]
	
	For $k > 5$, we need to show that the solution from the inequality would not be worse than for $ k = 5 $. Notice that every time we increase $k$ in inequality \eqref{derivative_bound} we multiply the left hand side by $ \dfrac{2k + 1}{4} > \sqrt{6} $ and the right hand side by $x$. But if we recall that $ \rho < 1 $, we get $ x < \sqrt{6} $, and thus the inequality becomes tighter. As a result, the statement holds for $k > 5$ as well.
    
\end{proof}

\subsection{Corollaries}

While our improvement may seem insignificant, it shows that the currently known bound between $ d(f) $ and $ bs(f) $ is not tight. Next two corollaries shows that the same holds for some other pairs of complexity measures.

\begin{corollary}
    For all Boolean functions $ \{0, 1\}^n \to \{0, 1\} $, we have
    \begin{equation}
        s^4(f) \geq (\sqrt{10} - 2)bs(f)
    \end{equation}
\end{corollary}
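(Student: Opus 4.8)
The plan is to obtain the corollary by chaining together two inequalities, substituting the sharper Theorem \ref{main} for the role that Tal's bound plays in the original derivation of $ s^4(f) \geq bs(f) $. The key external ingredient is Huang's resolution of the sensitivity conjecture \cite{huang}, which states that $ d(f) \leq s^2(f) $ for every Boolean function $ f $. Since both sensitivity and degree are nonnegative, this is equivalent to $ s^2(f) \geq d(f) $, and squaring both sides preserves the inequality to yield $ s^4(f) \geq d^2(f) $.

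With this in hand, I would simply invoke Theorem \ref{main}, already established above, namely $ d^2(f) \geq (\sqrt{10} - 2)bs(f) $. Concatenating the two bounds via transitivity gives
\[ s^4(f) \geq d^2(f) \geq (\sqrt{10} - 2)bs(f), \]
which is exactly the claimed statement.

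There is essentially no obstacle in this argument beyond correctly orienting Huang's inequality. The known bound $ s^4(f) \geq bs(f) $ is derived by the identical chaining but using the weaker $ d^2(f) \geq bs(f) $ in the final step; replacing that step with our improved constant factor is immediate. The only detail worth recording is that all the quantities involved are nonnegative, so both the squaring step and the use of transitivity are valid without any edge cases to check.
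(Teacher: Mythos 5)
Your proof is correct and follows exactly the same route as the paper's own argument: invoke Huang's bound $s^2(f) \geq d(f)$, square it to get $s^4(f) \geq d^2(f)$, and chain with Theorem \ref{main}. Nothing is missing.
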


\begin{proof}
    In the proof of the sensitivity conjecture \cite{huang}, Huang established the following bound for $d(f)$:
    \begin{equation}
        s^2(f) \geq d(f)
    \end{equation}
    
    Combined with Theorem \ref{main}, it follows that
    \[ s^4(f) \geq d^2(f) \geq (\sqrt{10} - 2)bs(f) \]
    
\end{proof}

\begin{corollary}
    For all Boolean functions $ \{0, 1\}^n \to \{0, 1\} $, we have
    \begin{equation}
        d^3(f) \geq (\sqrt{10} - 2)D(f)
    \end{equation}
\end{corollary}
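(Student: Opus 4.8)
The plan is to obtain this bound as a one-line consequence of Theorem~\ref{main} together with the standard structural relation $D(f) \le d(f)\cdot bs(f)$ (Midrijanis), exactly mirroring how the preceding corollary chains Theorem~\ref{main} with Huang's constant-free inequality $s^2(f) \ge d(f)$. The key observation is that Theorem~\ref{main} is an inequality between $d^2$ and $bs$, so to reach a statement about $D$ all I need is a \emph{constant-free} upper bound on $D$ in terms of the two quantities already in play; and $D(f)\le d(f)\,bs(f)$ is precisely of that shape.

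Concretely, first I would rewrite Theorem~\ref{main} as $bs(f) \le \dfrac{d^2(f)}{\sqrt{10}-2}$, multiply through by $d(f)$, and read off
\[ d^3(f) = d(f)\cdot d^2(f) \ge (\sqrt{10}-2)\,d(f)\,bs(f). \]
Applying $D(f) \le d(f)\,bs(f)$ to the right-hand side then gives $d^3(f) \ge (\sqrt{10}-2)\,D(f)$ at once. No further analysis of the symmetrization polynomial is required: all the analytic content already resides in Theorem~\ref{main}, and the corollary is a purely combinatorial consequence.

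The main obstacle — and really the only point to check — is that the auxiliary relation must carry \emph{no} multiplicative overhead: it has to be $D(f)\le d(f)\,bs(f)$ with leading constant $1$, since any bound of the form $D(f)\le \alpha\, d(f)\,bs(f)$ with $\alpha>1$ would degrade the corollary's constant to $(\sqrt{10}-2)/\alpha$. I would therefore invoke the sharp form of Midrijanis' inequality $D(f)\le d(f)\,bs(f)$. It is worth stressing that the cruder and more familiar chain $D(f)\le C(f)\,bs(f)$ — obtained by repeatedly querying a consistent minimal certificate, each round dropping the block sensitivity by one — does \emph{not} suffice here, because the certificate complexity $C(f)$ need not be bounded by $d(f)$; Midrijanis' more careful restriction argument, which effectively replaces $C(f)$ by $d(f)$ round by round and yields the depth bound as exactly $d(f)\,bs(f)$ rather than a constant multiple of it, is the single nontrivial input beyond Theorem~\ref{main}.
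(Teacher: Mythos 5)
Your proposal is correct and is essentially identical to the paper's own proof: both combine Theorem~\ref{main} with Midrijanis' bound $D(f) \leq bs(f) \cdot d(f)$ and multiply through by $d(f)$ to obtain $d^3(f) \geq d(f)\cdot(\sqrt{10}-2)bs(f) \geq (\sqrt{10}-2)D(f)$. Your additional remark that the auxiliary inequality must carry constant $1$ is a fair point of care, but it introduces no deviation from the paper's argument.
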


\begin{proof}
    Combining Theorem \ref{main} with the bound $ D(f) \leq bs(f) \cdot d(f) $ from the paper \cite{midrijanis2004exact}, we get
    \[ d^3(f) \geq d(f) \cdot (\sqrt{10} - 2)bs(f) \geq (\sqrt{10} - 2)D(f) \]
    
\end{proof}

\section{Approximating Polynomials} \label{sect4}

In this section, we improve the constant factor in the separation between the degree of an approximating polynomial and the block sensitivity and provide an example of a polynomial for $ NAE_n $ function that shows that not only our bound is asymptotically tight, but the difference between the best known constant in the lower bound and the constant in our example is relatively small as well.

\subsection{Lower bound}

Before the proof we need to derive a similar to \ref{approach} lemma, but this time for approximating polynomials.

\begin{lemma}
    Let $p$: $ \R \to \R $ be a polynomial of degree $d$ such that:
    
    \begin{enumerate}
        \item $ \forall i = 0, 1, \ldots, n $: $ -\dfrac{1}{3} \leq p(i) \leq \dfrac{4}{3} $, and
        \item $ \sup_{x \in [0; n]} |p^{(k)}(x)| \geq c $.
    \end{enumerate}
    
    Then either $ d^2 \geq \sqrt{6}n $ or the ratio $ x = \dfrac{d^2}{n} $ satisfies the following inequality:
    \begin{equation} \label{approxiamte_derivative_bound}
        \left( 1 - \frac{x^2}{6} \right) \frac{(2k - 1)!!}{2^k} \cdot \frac{6c}{5} < x^k
    \end{equation}
    
\end{lemma}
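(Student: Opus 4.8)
The plan is to follow the proof of Theorem~\ref{approach} almost verbatim, adjusting only the affine normalization of $p$ to account for the wider admissible range $[-\tfrac13,\tfrac43]$ (of width $\tfrac53$) in place of $[0,1]$. As before, I first dispose of the easy disjunct: if $d^2 \geq \sqrt{6}n$ there is nothing to prove, so assume $d^2 < \sqrt{6}n$, whence
\[ \rho = \frac{d^2(d^2-1)}{6n^2} < \frac{d^4}{6n^2} < 1, \]
and Theorem~\ref{zeller} becomes applicable.

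Next I would introduce the normalized polynomial
\[ P(x) = \frac{6}{5}\left[\, p\!\left(\frac{n}{2}(x+1)\right) - \frac12 \,\right]. \]
The substitution $y = \frac n2(x+1)$ carries the grid points $x_k = -1 + \frac{2k}{n}$ to the integers $y = 0,1,\dots,n$, subtracting the midpoint $\frac12$ of $[-\frac13,\frac43]$ recenters the values symmetrically, and multiplying by the reciprocal $\frac65$ of the half-width $\frac56$ guarantees $|P(x_k)| \leq 1$ for every $k$. Hence condition~2 of Theorem~\ref{zeller} holds and we obtain $||P|| \leq \frac{1}{1-\rho}$. Note that, in contrast to the exact case, the scaling factor $\frac65$ already maps the values into $[-1,1]$, so no extra factor of $\frac12$ survives in this bound.

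For the lower bound on the derivative, the chain rule gives $P^{(k)}(x) = \frac65\left(\frac n2\right)^k p^{(k)}\!\left(\frac n2(x+1)\right)$, and as $x$ runs over $[-1,1]$ the argument runs over $[0,n]$; condition~2 of the lemma therefore yields $||P^{(k)}|| \geq \frac65\left(\frac n2\right)^k c$. Combining this with the Markov brothers' inequality (Theorem~\ref{markov}) and the crude estimate $d^2(d^2-1)\cdots(d^2-k+1) < d^{2k}$, I get
\[ \frac65\left(\frac n2\right)^k c \leq ||P^{(k)}|| \leq \frac{d^{2k}}{(2k-1)!!}\cdot\frac{1}{1-\rho}. \]
Clearing denominators, dividing by $n^k$, and writing $x = \frac{d^2}{n}$ turns this into $(1-\rho)\frac{(2k-1)!!}{2^k}\cdot\frac{6c}{5} \leq x^k$; finally the substitution $\rho < \frac{d^4}{6n^2} = \frac{x^2}{6}$ sharpens $1-\rho$ to $1 - \frac{x^2}{6}$ and produces exactly \eqref{approxiamte_derivative_bound}.

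The proof is entirely parallel to that of Theorem~\ref{approach}; the only point that demands care is bookkeeping the normalization constants. Because the value range now has width $\frac53$ instead of $1$, the centering shift remains $\frac12$ but a genuinely new scaling factor $\frac65$ enters. Tracking where this factor lands --- it multiplies the derivative estimate, while simultaneously removing the $\frac12$ that appeared in the $||P||$ bound of the exact case --- is precisely what converts the $\frac{(2k-1)!!}{2^{k-1}}c$ of \eqref{derivative_bound} into the $\frac{(2k-1)!!}{2^k}\cdot\frac{6c}{5}$ of \eqref{approxiamte_derivative_bound}. There is no conceptual obstacle beyond this constant-chasing.
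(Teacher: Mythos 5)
Your proof is correct and follows essentially the same route as the paper's: the paper keeps $P(x) = p\left(\frac{n}{2}(x+1)\right) - \frac{1}{2}$ and applies Theorem \ref{zeller} to its rescaling, obtaining $||P|| \leq \frac{5}{6} \cdot \frac{1}{1-\rho}$, whereas you fold the factor $\frac{6}{5}$ into the definition of $P$ so that it reappears on the derivative side --- the two bookkeeping conventions yield the identical final inequality. The only cosmetic blemish (shared with the paper itself) is that your normalized $P$ satisfies $|P(x_k)| \leq 1$ rather than the strict inequality demanded in Theorem \ref{zeller} as stated, but the appendix proof of that theorem goes through verbatim with non-strict inequality, so nothing is lost.
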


\begin{proof}
    The only difference between this lemma and Theorem \ref{approach} is the bounds for $ p(k) $, therefore if we define $ P(x) $ the same as earlier, we get the weaker upper bound: $ ||P|| \leq \dfrac{5}{6} \cdot \dfrac{1}{1 - \rho} $. The remaining part of the proof is the same as in \ref{approach}. 
\end{proof}

\begin{proof}[Theorem \ref{approximate_main}]
    Using reduction \ref{reduction}, we can assume that the symmetrization polynomial of $f$ satisfies $ -\dfrac{1}{3} \leq p(0) \leq \dfrac{1}{3} $ and $ \dfrac{2}{3} \leq p(1) \leq \dfrac{4}{3} $. As always, we can only consider the case $ \rho < 1 $. Also, if $ n < 5 $, then the theorem follows from the original bound $ 6\deg^2_{1/3}(f) \geq bs(f)$.
    
    Suppose that $ p \in \mathcal{P}_3 $. By Lagrange's interpolation formula for $ x \in \{0, 1, 2\} $ and $ x \in \{ 0, 2, 5 \} $:
    \begin{equation} \label{app1}
        p(x) = \frac{(x - 1)(x - 2)}{2}p(0) + x(2 - x)p(1) + \frac{x(x - 1)}{2}p(2)
    \end{equation} 
    \begin{equation} \label{app2}
        p(x) = \frac{(x - 2)(x - 5)}{10}p(0) - \frac{x(x - 5)}{6}p(2) + \frac{x(x - 2)}{15}p(5)
    \end{equation}
    
    From \eqref{app1} we get $ \forall x \quad p''(x) = p(0) - 2p(1) + p(2) \leq -1 + p(2) $, and if $ p(2) \leq \dfrac{14}{15} $, then $ p''(x) \leq -\dfrac{1}{15} $. Otherwise, from \eqref{app2} we get $ \forall x \quad p(x) = \dfrac{1}{5}p(0) - \dfrac{1}{3}p(2) + \dfrac{2}{15}p(5) \leq \dfrac{11}{45} - \dfrac{1}{3}p(2) \leq -\dfrac{1}{15}$. If $ \deg{p} > 3 $, we can use the same reduction as in the proof of Theorem \ref{warm-up}.
    
    Applying \eqref{approxiamte_derivative_bound} with $ c = \dfrac{1}{15} $ and $ k = 2 $, we obtain the following inequality:
    
    \begin{equation}
        \left(1 - \frac{x^2}{6}\right) \frac{3}{50} < x^2
    \end{equation}
    
    It now follows that $ x \geq \sqrt{\dfrac{6}{101}} $ and $\deg_{1/3}^2(f) \geq \sqrt{\dfrac{6}{101}} bs(f)$.
    
\end{proof}

\subsection{Upper bound}

A function $ NAE_n $: $ \{0, 1\}^n \to \{0, 1\} $ equals to 1 iff $ x \in \{0^n, 1^n\} $, i.e. all the bits in the input are the same. The next theorem provides a polynomial that approximates $ NAE_n $ and gives the upper bound for $ \deg_{1/3}(f) $ in terms of the block sensitivity.

\begin{theorem} \label{example}
    Define $ d = \lceil\sqrt{c(n - 2)}\rceil $ with a constant $ c $ satisfying the following inequality:
    \begin{equation} \label{constant_inequality}
        2c + \frac{2}{3}c^2 - \frac{2c}{3(n - 2)} > 1
    \end{equation}
    
    Then there exists a polynomial of degree $ d $ if $d$ is even and $ d + 1 $ otherwise that is a $ \frac{1}{3} $-approximation of $ NAE_n $. 
\end{theorem}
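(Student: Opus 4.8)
The plan is to reduce the construction to a univariate interpolation problem and solve it with a rescaled Chebyshev polynomial. Since $NAE_n(x)$ depends only on the weight $w = x_1 + \cdots + x_n$ and equals $1$ exactly when $w \in \{0, n\}$, it suffices to produce a univariate polynomial $q$ with $q(0), q(n) \in [2/3, 4/3]$ and $q(w) \in [-1/3, 1/3]$ for $w = 1, \ldots, n-1$; then $P(x) = q(x_1 + \cdots + x_n)$ is a $\frac13$-approximation of $NAE_n$ of degree $\deg q$. I would introduce the affine substitution $t = \frac{2w - n}{n - 2}$, which sends the middle weights $\{1, \ldots, n-1\}$ onto $[-1, 1]$ and the two special weights $\{0, n\}$ onto $\pm(1 + \delta)$ with $\delta = \frac{2}{n-2}$.

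Next I would take the Chebyshev polynomial $T_{d'}$ of the first kind, where $d' = d$ when $d$ is even and $d' = d + 1$ otherwise, and build $q$ as an affine image of $T_{d'}(t)$. Choosing an even degree is essential: it guarantees $T_{d'}(-(1+\delta)) = T_{d'}(1+\delta)$, so both special weights receive the same large value and the symmetry of $NAE_n$ is respected. On the middle range the inequality $\lvert T_{d'}(t)\rvert \le 1$ for $t \in [-1,1]$ controls everything automatically, so the entire difficulty is concentrated in showing that the endpoint value $M := T_{d'}(1 + \delta)$ is strictly larger than $2$.

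This lower bound on $M$ is the crux. I would expand $T_{d'}$ in a Taylor series about $x = 1$, using the classical fact that $T_m^{(k)}(1) = \prod_{j=0}^{k-1} \frac{m^2 - j^2}{2j+1} \ge 0$ for $0 \le k \le m$; hence every Taylor coefficient at $x=1$ is nonnegative and, for $\delta > 0$, the series is bounded below by any initial segment. Keeping the first three terms gives $M \ge 1 + m^2 \delta + \frac{m^2(m^2 - 1)}{6}\delta^2$ with $m = d'$. Substituting $\delta = \frac{2}{n-2}$ and the inequality $m^2 \ge d^2 \ge c(n-2)$ (from $d = \lceil \sqrt{c(n-2)} \rceil$) collapses the right-hand side to exactly $1 + 2c + \frac{2}{3}c^2 - \frac{2c}{3(n-2)}$, and the hypothesis \eqref{constant_inequality} then yields $M > 2$. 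This computation is precisely where the stated constant $c$ and the form of the inequality come from, and verifying the monotonicity needed to replace $m^2$ by the bound $c(n-2)$ is the one place that requires care.

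Finally, with $M > 2$ in hand I would set $q(w) = \frac{1}{M+1}\,T_{d'}\!\left(\frac{2w-n}{n-2}\right) + \frac{1}{M+1} - \frac13$. At the special weights the Chebyshev term equals $M$, so $q(0) = q(n) = 1 - \frac13 = \frac23 \in [2/3,4/3]$; on the middle weights $T_{d'} \in [-1,1]$ forces $q(w) \in \left[-\frac13, \frac{2}{M+1} - \frac13\right] \subseteq [-\frac13, \frac13]$, where the last inclusion uses $M > 2$. Thus $P(x) = q(x_1 + \cdots + x_n)$ has degree $d'$ and $\frac13$-approximates $NAE_n$, which is exactly the claim. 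The main obstacle is the Taylor estimate of the previous paragraph together with the insistence on an even degree; once those are secured, the remainder is bookkeeping with the affine normalization.
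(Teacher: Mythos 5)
Your proposal is correct and follows essentially the same route as the paper: an even-degree Chebyshev polynomial rescaled so that the middle weights land in $[-1,1]$, with the endpoint value $T_{d'}\left(1 + \frac{2}{n-2}\right)$ bounded below by the same three-term Taylor expansion at $1$, which produces exactly the constant inequality \eqref{constant_inequality} and hence the threshold value $>2$ needed for the final affine normalization. The only differences are cosmetic: you justify truncating the Taylor expansion via nonnegativity of all Chebyshev derivatives at $1$ (the paper instead uses a Lagrange remainder together with a separate lemma that $T_k''$ is nondecreasing on $[1,\infty)$), and your normalization places the large values at weights $0$ and $n$, which in fact matches the paper's stated definition of $NAE_n$ more faithfully than the paper's own construction, whose $p(0)=p(n)=\frac{1}{3}$ approximates the complement of the function as defined.
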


\begin{proof}
    In our construction, we use the Chebyshev polynomials of the first kind, defined as $ T_k(x) = \cos{(k \arccos{x})} $. We need the following properties of them; proof of property 3 is omitted to Appendix \ref{A}, and property 4 is \cite[lemma 5.17]{markov} for $ k = 1 $ and $ k = 2 $.
    
    \begin{enumerate}
        \item If $k$ is even, then $ T_k(x) = T_k(-x)$. 
        \item $ \forall x \in [-1; 1] \quad |T_k(x)| \leq 1$.
        \item $ T_k''(\theta) \geq T_k''(1) $ for $ \theta \geq 1 $.
        \item $ T_k'(1) = k^2 $ and $ T_k''(1) = \dfrac{k^4 - k^2}{3} $.
    \end{enumerate}
    
    By definition, put
    \begin{equation}
        p(x) = 1 - \dfrac{2T_k(\frac{2x - n}{n - 2})}{3T_k(\frac{n}{n - 2})}
    \end{equation}
    
    It is clear from property 1 that $ p(0) = p(n) = \dfrac{1}{3} $. If we show that $ T_k\left(\dfrac{n}{n - 2}\right) \geq 2 $, then by property 2 for all $ 1 \leq k \leq n - 1 $ we have
    \[ \left| \frac{2T_k(\frac{2x - n}{n - 2})}{3T_k(\frac{n}{n - 2})} \right| \leq \frac{2 \cdot 1}{3 \cdot 2} = \frac{1}{3} \quad \Ra \quad \frac{2}{3} \leq p(k) \leq \frac{4}{3} \]
    
    and $ q(x_1, \ldots, x_n) = p(n - x_1 - \ldots - x_n) $ is indeed the $ \frac{1}{3} $-approximation of $ NAE_n $. 
    
    Substituting $ x = 1 $ in the Taylor series for $ T_k(\dfrac{n}{n - 2}) $, we obtain
    \begin{equation}\begin{array}{c}
        T_k\left( \dfrac{n}{n - 2} \right) = T_k(1) + \dfrac{2}{n - 2}T_k'(1) + \dfrac{2}{(n - 2)^2}T_k''(\theta) \geq \\
        \\
        \geq T_k(1) + \dfrac{2}{n - 2}T_k'(1) + \dfrac{2}{(n - 2)^2}T_k''(1)
    \end{array}\end{equation}
    
    The last inequality holds because of property 3. Combining property 4 and \eqref{constant_inequality}, we get:
    \begin{equation}\begin{array}{c}
        T_k\left( \dfrac{n}{n - 2} \right) \geq 1 + 2c + \dfrac{2}{3(n - 2)^2}(c^2(n - 2)^2 - c(n - 2)) =
        \\
        \\
        = 1 + 2c + \dfrac{2}{3}c^2 - \dfrac{2c}{3(n - 2)} \geq 2
    \end{array}\end{equation}
    
\end{proof}

Because the last term in the left hand side of \eqref{constant_inequality} tends to zero as $n$ tends to infinity, the optimal $c$ tends to the solution of the following inequality: $ 2x + \dfrac{2}{3}x^2 > 1 $. The solution is $ x > \dfrac{1}{2}(\sqrt{15} - 3) \simeq 0.43 $, so the difference between $ c $ and the best known lower bound is less than $0.2$, which shows that the bound \ref{approximate_main} is close to be tight.

\section{Fully Sensitive Function of Small Degree} \label{sect5}

The last result of this paper is about an example of a function with low degree and high block sensitivity. We study properties of conjectured function $ \tilde{f} $ on $10$ variables with $ d(\tilde{f}) = 4 $. By applying the same composition scheme as in the previous examples, we can generalize $ \tilde{f} $ for the arbitrary large $n$. While we do not provide an example of $\tilde{f}$, we prove that if $\tilde{f}$ is fully sensitive at 0, then by applying Lemma \ref{symmetrization} to it, the only univariate polynomial we can get is \eqref{sym_example}.

We prove this statement in two steps. Firstly, we achieve such a polynomial by interpolation. Secondly, we prove the uniqueness using the linear programming.

\subsection{Interpolation}

The first part of the proof is to construct polynomial \eqref{sym_example}. We do this by establishing the extremal property of all the symmetrizations of degree 4 for $n \geq 8$.

\begin{theorem}
    Let $f$: $ \{0, 1\}^n \to \{0, 1\} $ be fully sensitive at 0 and $ n \geq 8 $; then for symmetrization polynomial $p$ we have $ \sup_{x \in [0; n]} |p^{(4)}(x)| \geq \frac{1}{6} $. Moreover, the only polynomial for which inequality is tight is \eqref{sym_example}.
\end{theorem}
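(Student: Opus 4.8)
The plan is to observe that for a polynomial of degree at most $4$ the fourth derivative is the constant $p^{(4)}\equiv c_4$, where $c_4=\Delta^4 p(0)=p(4)-4p(3)+6p(2)-4p(1)+p(0)$ is the leading Newton coefficient in \eqref{representation}. Hence $\sup_{x\in[0;n]}|p^{(4)}(x)|=|c_4|$, and the theorem is equivalent to the statement that $c_4\le-\tfrac16$ for every admissible $p$, with equality only for $\tilde p$. The admissibility constraints come straight from Lemma \ref{symmetrization} and full sensitivity at $0$: namely $p(0)=0$, $p(1)=1$, and $0\le p(k)\le1$ for $k=2,\dots,n$, with $p$ ranging over polynomials of degree at most $4$. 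Since such a $p$ is determined by any five of its values and $c_4$ is a linear functional of them, the whole question is a finite linear program: maximize $c_4$ over the polytope cut out by these box constraints. Note that $\max c_4=-\tfrac16<0$ will automatically also rule out $\deg p\le3$, i.e. $c_4=0$.

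First I would remove the dependence on $n$ by a monotonicity remark: increasing $n$ only adds constraints $0\le p(k)\le1$, so the feasible set shrinks and $\max c_4$ can only decrease. It therefore suffices to establish $\max c_4=-\tfrac16$, with unique maximizer $\tilde p$, for the smallest admissible value $n=8$; the inequality then holds for every $n\ge8$, and $\tilde p$ stays feasible (hence optimal) exactly while $n\le10$, since a direct evaluation gives $\tilde p(11)<0$.

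For the achievability half (the ``Interpolation'' step) I would exhibit $\tilde p$ as the unique degree-$4$ interpolant through the five points $(0,0),(1,1),(2,1),(5,0),(8,1)$, which are the boundary-active constraints suggested by the symmetry of the extremal configuration about $x=5$. A direct computation then yields the closed form $\tilde p(x)=-\tfrac1{144}(x-5)^4+\tfrac{25}{144}(x-5)^2$, which coincides with \eqref{sym_example}; from it $c_4=-\tfrac16$, and one checks that the remaining values $\tilde p(3)=\tfrac7{12}$, $\tilde p(4)=\tilde p(6)=\tfrac16$, $\tilde p(7)=\tfrac7{12}$, $\tilde p(9)=1$, $\tilde p(10)=0$ all lie in $[0;1]$. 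This gives $\max c_4\ge-\tfrac16$.

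The hard part is the matching upper bound together with uniqueness --- the genuine ``linear programming'' step. Here I would produce a dual certificate: using that $c_4=\Delta^4 p(j)$ for every $j$ with $0\le j\le n-4$, I would search for free-sign multipliers on the two equalities $p(0)=0$, $p(1)=1$ and nonnegative multipliers on the inequality constraints, combined so that the resulting valid inequality reads exactly $c_4\le-\tfrac16$. In view of the extremal configuration I expect this certificate to be supported on the box constraints at $k=2$ (upper), $k=5$ (lower) and $k=8$ (upper) together with the two equalities. Exhibiting explicit weights that certify the value $-\tfrac16$ rather than a merely nearby number is the crux, and is where a solver is genuinely useful. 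Once the certificate is in hand, complementary slackness forces every optimizer to saturate exactly these five conditions, and since five interpolation conditions at distinct nodes determine a degree-$4$ polynomial uniquely, the maximizer must be $\tilde p$. The principal obstacle is thus the explicit construction and verification of the dual weights, followed by a clean strict-complementarity argument to upgrade optimality to uniqueness.
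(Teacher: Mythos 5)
Your reduction to a finite linear program rests on the identity $\sup_{x\in[0;n]}|p^{(4)}(x)|=|c_4|$, which is valid only when $\deg p\le 4$ --- but the theorem carries no degree hypothesis. A fully sensitive function on $n\ge 8$ variables can have a symmetrization polynomial of degree up to $n$ (parity, for instance), and for such $p$ your LP over five values says nothing about $\sup|p^{(4)}|$. This is the one genuine gap in your plan. The fix is the mean value theorem for divided differences, which is exactly what the paper's Lagrange-interpolation identities implement: for any five nodes $x_0<\dots<x_4$ in $[0;n]$ and any polynomial $p$, there exists $\xi$ with $p^{(4)}(\xi)=4!\,[x_0,\dots,x_4]p$, the divided difference being an explicit linear combination of the values $p(x_i)$ (equivalently: subtract the degree-$4$ interpolant $q$ and apply Rolle's theorem four times to $p-q$, the reduction the paper uses in Theorem \ref{warm-up} and Lemma \ref{c3_bound}; the paper is itself terse here, since its displayed ``$\forall x$'' identities as written also presuppose $\deg p\le4$). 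With that quantifier inserted, your argument covers the inequality for arbitrary degree, and the uniqueness claim is then obtained within the class of degree-at-most-$4$ polynomials, which is the class relevant to the conjectured function $\tilde f$.

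Apart from that, your route is essentially the paper's route --- interpolation plus LP duality --- and the step you call the crux needs no solver: your guessed active set already forces the certificate. Evaluations at the five nodes $\{0,1,2,5,8\}$ form a basis of the dual space of $\mathcal{P}_4$, so $c_4=24\,[0,1,2,5,8]p$ is the unique representation supported there, and the Lagrange weights give
\[ c_4=\frac{3}{10}p(0)-\frac{6}{7}p(1)+\frac{2}{3}p(2)-\frac{2}{15}p(5)+\frac{1}{42}p(8)\le-\frac{6}{7}+\frac{2}{3}+\frac{1}{42}=-\frac{1}{6}, \]
using $p(0)=0$, $p(1)=1$, $p(2)\le1$, $p(5)\ge0$, $p(8)\le1$; the signs cooperate exactly as you predicted, and equality forces the five interpolation conditions, hence $p=\tilde p$ by \eqref{sym_example}. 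This single certificate is cleaner than the paper's, which uses two node sets, $\{0,1,2,7,8\}$ and $\{0,1,2,5,7\}$, and a case analysis on $p(7)$. What the paper's extra node buys is that tightness additionally pins down $p(7)=\frac{7}{12}$, i.e.\ six values rather than five, so its uniqueness conclusion holds within degree at most $5$ rather than $4$; neither your argument nor the paper's establishes the ``moreover'' clause for unrestricted degree. Your supporting computations (the closed form $-\frac{1}{144}(x-5)^4+\frac{25}{144}(x-5)^2$, the feasibility of $\tilde p$ exactly for $n\le 10$, and the monotonicity in $n$) are all correct.
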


\begin{proof}
    Using the Lagrange’s interpolation formula for $ x \in \{0,1,2,7,8\} $ and $ x \in \{0,1,2,5,7\} $, we get the following representations:
    
    \begin{enumerate} 
        \item \begin{equation}\label{p1}\begin{array}{c}
            \forall x \quad p^{(4)}(x) = -\dfrac{24}{1 \cdot 1 \cdot 6 \cdot 7} + \dfrac{24}{2 \cdot 1 \cdot 5 \cdot 6} p(2) - \dfrac{24}{7 \cdot 6 \cdot 5 \cdot 1} p(7) + \\
            \\
            + \dfrac{24}{8 \cdot 7 \cdot 6 \cdot 1} p(8) = -\dfrac{4}{7} + \dfrac{2}{5} p(2) - \dfrac{4}{35} p(7) + \dfrac{1}{14} p(8) \leq -\dfrac{1}{10} - \dfrac{4}{35} p(7)
        \end{array}\end{equation}
        \item \begin{equation}\label{p2}\begin{array}{c}
            \forall x \quad p^{(4)}(x) = -\dfrac{24}{1 \cdot 1 \cdot 4 \cdot 6} + \dfrac{24}{2 \cdot 1 \cdot 3 \cdot 5} p(2) - \dfrac{24}{5 \cdot 4 \cdot 3 \cdot 2} p(5) + \\
            \\
            + \dfrac{24}{7 \cdot 6 \cdot 5 \cdot 2} p(7) = -1 + \dfrac{4}{5} p(2) - \dfrac{1}{5} p(5) + \dfrac{2}{35} p(7) \leq -\dfrac{1}{5} + \dfrac{2}{35} p(7)
        \end{array}\end{equation}
    \end{enumerate}
    
    If $ p(7) \leq \dfrac{7}{12} $, then from \eqref{p1} we get $ p^{(4)}(x) \leq -\dfrac{1}{6} $, otherwise we get the same result from \eqref{p2}. Inequality is tight iff $ p(7) = \dfrac{7}{12}, \ p(2) = 1, \ p(5) = 0 $ and $ p(8) = 1 $. Combined with $ p(0) = 0 $ and $ p(1) = 1 $, we obtain that there is the only polynomial of degree at most 5 that satisfies all the constrains. By applying the Lagrange’s interpolation formula, we get polynomial \eqref{sym_example}. 
\end{proof}

Note that \eqref{sym_example} is indeed a symmetrization polynomial for some function, because for $ k = 0, \ldots, 10 $ the values $ p(k) $ represent the fraction of inputs for corresponding weights (i.e. $\binom{n}{k} \cdot p(k) \in \N_0$).

\subsection{Linear programming}

The second part of the proof is to show that \eqref{sym_example} is the only symmetrization polynomial for $ n = 10 $ and $ d(f) \leq 4 $. This part is done with a linear programming solver. In our case we are going to use the scipy.optimize.linprog, the full code for the problem is available at Google Colab \cite{colab}.

\begin{theorem}
    The only symmetrization polynomial for the fully sensitive at 0 function of 10 variables with degree at most 4 is \eqref{sym_example}.
\end{theorem}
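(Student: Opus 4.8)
The plan is to recast the statement as an integer-feasibility problem over the eleven evaluation values of the symmetrization polynomial and then to show that this problem has a unique solution. Write $v_k = \tilde p(k)$ for $k = 0, 1, \ldots, 10$. The symmetrization of any degree-$\le 4$ function on $10$ variables is governed by four families of conditions: (i) $\tilde p \in \mathcal{P}_4$, which by Lemma \ref{symmetrization} is equivalent to the vanishing of all fifth finite differences of $(v_0, \ldots, v_{10})$, i.e. six linear equations cutting the values down to a five-dimensional space; (ii) the boundary conditions $v_0 = 0$ and $v_1 = 1$, coming from $f(0)=0$ and from full sensitivity at $0$ (every weight-$1$ input is flipped to $1$, so the whole weight-$1$ layer maps to $1$); (iii) the box constraints $0 \le v_k \le 1$, since $v_k$ is a fraction of inputs; and (iv) the integrality constraints $\binom{10}{k} v_k \in \N$, because $\binom{10}{k}v_k$ counts the weight-$k$ inputs on which $f = 1$. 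I emphasize that (i)--(iv) are merely \emph{necessary} conditions on the symmetrization of any putative $\tilde f$, which is exactly what is needed: exhibiting \eqref{sym_example} as their only common solution proves the theorem without having to realize that solution by an explicit Boolean function.

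First I would record that the real relaxation (i)--(iii) does not suffice. It defines a three-dimensional polytope $P$ (five free coefficients minus the two boundary equations), of which \eqref{sym_example} is merely one vertex: the active inequalities there are $v_2 = 1$, $v_5 = 0$, $v_8 = 1$, matching precisely the tight case isolated in the preceding interpolation theorem. Hence $P$ contains a continuum of real polynomials, and the lattice condition (iv) is indispensable for uniqueness.

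The core step is to verify $P \cap L = \{\,\eqref{sym_example}\,\}$, where $L$ is the lattice cut out by (iv). I would run this as a linear program: for each coordinate $k$, maximize and minimize $v_k$ over $P$ to obtain a tight interval $[\ell_k, u_k]$, so that the integer $\binom{10}{k} v_k$ is confined to the finitely many integers in $[\binom{10}{k}\ell_k,\ \binom{10}{k}u_k]$. Since a degree-$\le 4$ polynomial is determined by any five of its values, one then branches over the admissible integer values at five free coordinates (the others being forced by the fifth-difference relations) and tests each branch for feasibility with a further LP. A useful accelerator is the interpolation theorem itself: it already gives $\Delta^4 v_k = p^{(4)} \le -\tfrac16$, with equality only for \eqref{sym_example}, so it suffices to show that no integer-feasible point has $p^{(4)} < -\tfrac16$. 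Because $p^{(4)} = v_4 - 4v_3 + 6v_2 - 4v_1 + v_0$ is a single rational linear functional with bounded denominators under (iv), optimizing just this quantity over $P \cap L$ and finding it forced to equal $-\tfrac16$ collapses the search; the tightness then propagates through the interpolation inequalities to pin $v_2 = 1$, $v_5 = 0$, $v_7 = \tfrac{7}{12}$, $v_8 = 1$, and re-interpolation against $v_0 = 0$, $v_1 = 1$ recovers \eqref{sym_example}.

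The main obstacle is turning the floating-point output of \texttt{scipy.optimize.linprog} into a genuine proof. The branch enumeration must be demonstrably exhaustive, and each reported infeasibility must be certified rather than merely observed numerically. I would address this by choosing the five branching coordinates so as to keep the product of candidate counts small, by rerunning the decisive programs in exact rational arithmetic, and by extracting Farkas (LP-duality) certificates for every infeasible branch, which are checkable by hand. The subsidiary difficulty is bookkeeping: confirming that the unique survivor, obtained a priori only as the interpolant of six points and hence of degree at most $5$, in fact has degree exactly $4$ and coincides coefficient-for-coefficient with \eqref{sym_example}.
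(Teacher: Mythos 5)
Your overall mechanism --- pin $p^{(4)}$ to the value $-\frac{1}{6}$ by optimization and then invoke the tightness clause of the interpolation theorem to recover \eqref{sym_example} --- is the same as the paper's. But your proposal rests on a structural claim that is false, and it is exactly this claim that generates all of your integer-programming machinery. You assert that the real relaxation (i)--(iii) does not suffice: that it defines a three-dimensional polytope $P$ of which \eqref{sym_example} is ``merely one vertex'' (with active constraints $v_2 = 1$, $v_5 = 0$, $v_8 = 1$), hence contains a continuum of polynomials, so that the lattice condition (iv) is ``indispensable.'' In fact $P$ is a single point. Your dimension count (five free values minus two equations) only bounds the dimension of the affine hull; the box constraints collapse it. At \eqref{sym_example} the value vector is $\left(0,1,1,\frac{7}{12},\frac{1}{6},0,\frac{1}{6},\frac{7}{12},1,1,0\right)$, so \emph{five} inequalities are active, not three: $v_2 = 1$, $v_5 = 0$, $v_8 = 1$, $v_9 = 1$, $v_{10} = 0$. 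Two divided-difference identities, valid for every $p \in \mathcal{P}_4$ with $v_0 = 0$ and $v_1 = 1$, certify the collapse by hand: over the nodes $\{0,1,2,5,8\}$,
\[ p^{(4)} = -\tfrac{6}{7} + \tfrac{2}{3}v_2 - \tfrac{2}{15}v_5 + \tfrac{1}{42}v_8 \le -\tfrac{1}{6}, \]
while over the nodes $\{0,1,5,9,10\}$,
\[ p^{(4)} = -\tfrac{1}{12} + \tfrac{3}{50}v_5 - \tfrac{1}{12}v_9 + \tfrac{4}{75}v_{10} \ge -\tfrac{1}{6}. \]
So every point of $P$ has $p^{(4)} = -\frac{1}{6}$, both identities must be tight, and the forced values $v_2 = 1$, $v_5 = 0$, $v_8 = 1$ together with $v_0 = 0$, $v_1 = 1$ interpolate to \eqref{sym_example}: the real relaxation alone already has a unique solution.

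This is precisely the paper's proof, phrased as an LP rather than via explicit dual certificates: it minimizes $c_4$ over the real relaxation (getting $-\frac{1}{144}$, i.e. $p^{(4)} \ge -\frac{1}{6}$), separately checks that $c_4 > 0$ is infeasible, and combines this with the interpolation theorem's tightness clause. Integrality never enters; the paper's remark that $\binom{10}{k}v_k \in \N_0$ is only a sanity check that \eqref{sym_example} could arise from an actual function, not part of the uniqueness argument. To be fair, your branch-and-bound over the integer points would still terminate with the right answer (adding constraints can only shrink the feasible set), and your concern about certifying floating-point LP output is legitimate --- the Farkas certificates you call for are essentially the two identities displayed above. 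But as written your proof would open with a false statement, and the very first LPs you propose (maximizing and minimizing each $v_k$ over $P$) would return degenerate intervals with $\ell_k = u_k$, contradicting your premise and rendering the entire branching scheme vacuous.
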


\begin{proof}
    Suppose $ p(x) = c_1x + c_2x^2 + c_3x^3 + c_4x^4 $ is the needed polynomial. We use the necessary (but not sufficient) conditions for $p(x)$ to create a linear programming task. Namely, we require $ p(1) = 1, \ \forall k = 2, \ldots, 10$: $ 0 \leq p(k) \leq 1 $ and some additional constrain for $c_4$. We require a solver to minimize $c_4$.
    
    If we add a constrain $ c_4 > 0 $, solver proves that problem is infeasible. Without any constrains for $ c_4 $, solver states that the solution is $ -\dfrac{1}{144} $, i.e. the minimum value for $ p^{(4)}(x) $ for which solver can find the solution is $ -\dfrac{1}{144} \cdot 24 = -\dfrac{1}{6} $. But we know that the only polynomial for this value is \eqref{sym_example}, thus we obtain it's uniqueness.
\end{proof}

\section{Conclusion}

Although we made improvements in relations between some complexity measures, we strongly suspect that the current results are still not tight. For example, the choice of points for interpolation in many theorems was not really justified, so we think that finding a pattern for the choice of interpolation set is one of the keys for the further improvements. Also, we suspect that by using Bernstein's inequality (see \cite[theorem 5.1.7]{borwein1995polynomials}) in Theorem \ref{main} with or instead of the Markov brothers' inequality for the first derivative one might improve the result as well.

Another open question occurs if we add further restrictions for the function $f$, for example, if we want $f$ to be symmetrical. It was proved by von zur Gathen and Roche that if $ n = p - 1 $ for prime $p$, then $ d(f) = n $, and as a corollary in general $ d(f) = n - \mathcal{O}(n^{0.525}) $ \cite{DBLP:journals/combinatorica/GathenR97}. It is conjectured that $ d(f) \geq n - 3 $, but very little progress was made since. For instance, the result by Cohen and Shpilka \cite{DBLP:journals/eccc/CohenS10} states that if $ n = p^2 - 1 $, then $ d(f) \geq n - \sqrt{n} $.

\subsubsection{Acknowledgments.} Author would like to thank Vladimir V. Podolskii for the proof idea for Theorem \ref{warm-up}.

\bibliographystyle{splncs04}
\bibliography{sample}

\appendix

\section{Omitted Proofs} \label{A}

\begin{proof}[Lemma \ref{symmetrization}]
    Define $ d = \deg{p^{sym}} $ and $ P_k = \sum_{|S| = k} \prod_{i \in S} x_i $ where $S$ are chosen from the subsets of $ [n] = \{1, 2, \ldots, n\} $. Suppose that $S$ is a monomial in $ p $; then by definition symmetrization adds up all the monomials of size $|S|$ to $ p^{sym} $ equal amount of times: in order to get a specific monomial $ S' $ one should fix the permutation of variables in $ S' $ and in $ [n] \setminus S' $, thus amount of every monomial is equal. Therefore, we can rewrite $ p^{sym} $ as
    \[ p^{sym}(x) = c_0 + c_1 P_1(x) + \ldots + c_d P_d(x) \]
    
    Note that we only interested in $ x \in \{0, 1\}^n $; therefore, every term in $ P_k $ is equal to 1 iff every variable in it is equal to 1. Thus it is obvious that if $ z = x_1 + x_2 + \ldots + x_n $, then 
    \[ p^{sym}(x) = c_0 + c_1 \binom{z}{1} + \ldots + c_d \binom{z}{d} = \tilde{p}(z) \]
    
    $ \deg{\tilde{p}} \leq \deg{p} $ because $ \deg{p^{sym}} \leq \deg{p} $.
\end{proof}

\begin{proof}[Theorem \ref{zeller}]
    Define $ K = \inf\{ k : ||p|| \leq 1 + k \} $. From Inequality \ref{markov} we get
    \begin{equation} \label{second_derivative_bound}
        ||p''(x)|| \leq \frac{d^2(d^2 - 1)}{3}(1 + K)
    \end{equation}
    
    Let $ \xi $ be the point of maximum on $ [-1; 1] $, i.e. $ ||p|| = |p(\xi)| $. The cases $ \xi = \pm 1 $ are trivial because $ |p(\xi)| \leq 1 $, so we can assume that $ \xi $ is an inner point and $ p'(\xi) = 0 $. Also, because $ \forall k = 0, 1, \ldots, n - 1 \quad x_{k + 1} - x_k = \dfrac{2}{n} $ there exists such $k$ that $ \left| x_k - \xi \right| \leq \dfrac{1}{n} $. Applying Taylor series for $ p(x) $, we obtain
    \[ p(x_k) = p(\xi) + (x_k - \xi)p'(\xi) + \frac{(x_k - \xi)^2}{2}p''(\theta) = p(\xi) + \frac{(x_k - \xi)^2}{2}p''(\theta), \quad \theta \in [-1;1] \]
    
    Substituting \eqref{second_derivative_bound} in the last equality, we get another bound for $||p||$:
    \begin{equation}\begin{array}{c}
        ||p|| = |p(\xi)| = \left|p(x_k) - \dfrac{(x_k - \xi)^2}{2}p''(\theta)\right| \leq |p(x_k)| + \left|\dfrac{(x_k - \xi)^2}{2}p''(\theta)\right| \leq \\
        \\
        1 + \dfrac{1}{2n^2} \cdot \dfrac{d^2(d^2 - 1)}{3}(1 + K) =  1 + \rho(1 + K)
    \end{array}\end{equation}
    
    By definition $ K \leq \rho(1 + K) $ and as a corollary $K \leq \dfrac{\rho}{1 - \rho}$ and $ ||p|| \leq 1 + K \leq \dfrac{1}{1 - \rho} $.
    
\end{proof}

\begin{proof}[Theorem \ref{reduction}]
    Let $ x $ be the input such that $ bs(f) = bs(f, x) $, and let $ S_1, S_2, \ldots, S_t $ be the blocks on which we achieve such block sensitivity. Without loss of generality we can assume that $ f(0) = 0 $, otherwise we can introduce the new function $ g(x) = 1 - f(x) $.
    
    Let $\tilde{f}$ be defined as follows
    \begin{equation}
         \tilde{f}(y_1, \ldots, y_t) = f(x \oplus y_1S_1 \oplus \ldots \oplus y_tS_t)
    \end{equation}
    i.e. we create a new input for $f$ such that every bit $x_j$ is equal to $ x_j \oplus y_i $ if $ x_j \in S_i$ or $x_j$ is left unchanged otherwise.
    
    $ d(\tilde{f}) \leq d(f) $ because $ \tilde{f} $ is a linear substitution in $ f $. On the other hand, $ \tilde{f} $ is fully sensitive at 0 as $ \tilde{f}(0) = f(0) = 0 $ and $ f(e_j) = f(x^{(S_i)}) = 1 $. Thus $ \tilde{f} $ satisfies the statement as $ t = bs(f)$.
    
\end{proof}

\begin{proof}[Lemma \ref{c3_bound}]
    Suppose that $ p \in \mathcal{P}_3 $. Using the Lagrange's interpolation formula for $ x \in \{0,1,3,4\} $, we get the following representation:
    \[ p(x) = \frac{x(x - 3)(x - 4)}{6} - \frac{x(x - 1)(x - 4)}{6}p(3) + \frac{x(x - 1)(x - 3)}{12}p(4) \]
	\[ \forall x \quad p'''(x) = 1 - p(3) + \frac{p(4)}{2} \geq 1 - p(3)  \]
	
	In general case, let $ q \in \mathcal{P}_3 $ be equal to $p$ on the same set of points. Similarly to Theorem \ref{warm-up}, if $ \tilde{p}(x) = p(x) - q(x) $, then $ \exists \xi \in [0; 4]$: $ \tilde{p}'''(\xi) = 0 $ and $ |p'''(\xi)| \geq 1 - p(3) $.
    
\end{proof}

\begin{proof}[Lemma \ref{sum_bound}]
    The Maclaurin series for the natural logarithm converges for $ -1 \leq x < 1 $. Substituting $ x = -\dfrac{1}{2} $, we can calculate and bound our sum:
    \[ \ln(1 + x) = \sum_{k = 1}^\infty \frac{(-1)^{k + 1}}{k}x^k \]
    \[ \ln{\frac{1}{2}} = \sum_{k = 1}^\infty \frac{(-1)^{k + 1}}{k}\left(-\frac{1}{2}\right)^k = -\sum_{k = 1}^\infty \frac{1}{k2^k} \quad \Ra \quad \sum_{k = 1}^\infty \frac{1}{k2^k} = \ln{2} \]
    \[ \sum_{k = 4}^{\infty} \frac{1}{k2^{k - 2}} = 4\sum_{k = 4}^{\infty} \frac{1}{k2^k} = 4\left( \sum_{k = 1}^{\infty} \frac{1}{k2^k} - \frac{2}{3} \right) = 4\left( \ln{2} - \frac{2}{3} \right) < \frac{1}{8} \]
\end{proof}

\begin{proof}[Theorem \ref{example}, property 3]
    As $ T_k(\cos{x}) = \cos{kx} $, we can see that all the roots of the Chebyshev polynomial lie on $ [-1; 1] $. By the Rolle's theorem, all the roots of any derivative of the Chebyshev polynomial also lie on $ [-1; 1] $. From \cite[lemma 5.17]{markov} we get that $ T_k'''(1) = \dfrac{k^2(k^2 - 1)(k^2 - 2)}{15} > 0 $, so $ T_k'''(x) > 0 $ for $ x \geq 1 $ and $ T_k''(\theta) \geq T_k''(1) $ for $ \theta \geq 0 $.
\end{proof}

\end{document}